\theoremstyle{plain}
\numberwithin{equation}{section}
\newtheorem{thm}{Theorem}[section]
\newtheorem{lem}[thm]{Lemma}
\newenvironment{exam}[1]
{\begin{flushleft}\textbf{Example #1}.\enspace}%
{\end{flushleft}}
\newcounter{cond}
\newcommand{\trace}{tr}
\newcommand{\instr}{In}
\newcommand{\ob}{Ob}
\newcommand{\sob}{Sob}
\newcommand{\escript}{\mathcal{E}}
\newcommand{\fscript}{\mathcal{F}}
\newcommand{\hscript}{\mathcal{H}}
\newcommand{\iscript}{\mathcal{I}}
\newcommand{\jscript}{\mathcal{J}}
\newcommand{\lscript}{\mathcal{L}}
\newcommand{\oscript}{\mathcal{O}}
\newcommand{\sscript}{\mathcal{S}}
\newcommand{\uscript}{\mathcal{U}}
\newcommand{\vscript}{\mathcal{V}}
\newcommand{\iscripthat}{\widehat{\iscript}}
\newcommand{\jscripthat}{\widehat{\jscript}}
\newcommand{\lscripthat}{\widehat{\lscript}}
\newcommand{\iscriptbar}{\overline{\iscript}}
\newcommand{\jscriptbar}{\overline{\jscript}}
\newcommand{\brac}[1]{\left\{#1\right\}}
\newcommand{\paren}[1]{\left(#1\right)}
\newcommand{\sqbrac}[1]{\left[#1\right]}
\newcommand{\elbows}[1]{{\left\langle#1\right\rangle}}
\newcommand{\ket}[1]{{\left|#1\right>}}
\newcommand{\bra}[1]{{\left<#1\right|}}
\begin{document}

\title{DUAL QUANTUM INSTRUMENTS \\ AND SUB-OBSERVABLES}
\author{Stan Gudder\\ Department of Mathematics\\
University of Denver\\ Denver, Colorado 80208\\
sgudder@du.edu}
\date{}
\maketitle

\begin{abstract}
We introduce the concepts of dual instruments and sub-observables. We show that although a dual instruments measures a unique observable, it determines many sub-observables. We define a unique minimal extension of a sub-observable to an observable and consider sequential products and conditioning of sub-observables. Sub-observable effect algebras are characterized and studied. Moreover, the convexity of these effect algebras is considered. The sequential product of instruments is discussed. These concepts are illustrated with many examples of instruments. In particular, we discuss L\"uders, Holero and constant state instruments. Various conjectures for future research are presented.
\end{abstract}

\section{Introduction}  
In this section we only present general ideas and the detailed definitions will be given in Section~2. An instrument $\iscript$ is considered to be a two-step measurement process. In the first step, an input state $\rho$ is selected and a measurement of $\iscript$ is performed. The outcome $\Delta$ of this measurement is observed and the probability of this outcome is given by the trace $\trace\sqbrac{\iscript (\Delta )(\rho )}$. In the second step, the state $\rho$ is updated to a new state $\iscript (\Delta )(\rho )^\sim$ depending on the outcome $\Delta$ of the first step. When $\iscript$ produces the outcome $\Delta$ we say that the resulting effect is $A(\Delta )$. We call the map $\Delta\mapsto A(\Delta )$ an effect-valued measure or observable and say that $\iscript$ measures the observable $A$. As we shall see in Section~2, the probability distribution of $A$ in the state $\rho$ becomes:
\begin{equation*}
\trace\sqbrac{\rho A(\Delta )}=\trace\sqbrac{\iscript (\Delta )\rho}
\end{equation*}
In Section~2, we introduce the concept of a sub-observable $A_1$ which can be considered as a deficient observable in the sense that all the possible values of $A_1$ need not be attainable.

We also present the concept of a dual instrument $\iscript ^*$ in Section~2. As we shall see, $\iscript ^*$ satisfies the equation
\begin{align}                
\label{eq11}
\trace\sqbrac{\rho\iscript ^*(\Delta )(a)}=\trace\sqbrac{\iscript (\Delta )(\rho )a}
\end{align}
for states $\rho$, outcomes $\Delta$ and effects $a$. Equation~\eqref{eq11} gives a duality between $\iscript$ and $\iscript ^*$. Defining
$\iscript _a^* (\Delta )=\iscript ^*(\Delta )(a)$, we shall show in Section~2 that $\iscript _a^*$ is a sub-observable which we say is determined by $\iscript$. Thus, although $\iscript$ measures a unique observable, it determines many sub-observables.

In Section~3, we show that every sub-observable has a unique minimal extension to an observable. We also consider sequential products and conditioning of
sub-observables. We next characterize and study sub-observable effect algebras. In particular, we show that the set of sub-observables $\iscript _a^*$ determined by
$\iscript$ forms an effect algebra in the natural way. Moreover, the convexity of these effect algebras is considered. Section~4 studies sequential products of instruments. All of these concepts are illustrated with many examples of instruments and observables. In particular, we discuss L\"uders, Holevo and constant state instruments. Various conjectures for future research are presented

\section{Basic Definitions}  
Let $S$ be a quantum system described by a complex Hilbert space $H$ and $\lscript (H)$ be the set of bounded linear operators on $H$. For
$A,B\in\lscript (H)$, we write $A\le B$ if $\elbows{\phi ,A\phi}\le\elbows{\phi ,B\phi}$ for all $\phi\in H$. We call $a\in\lscript (H)$ an \textit{effect} if $0\le a\le I$ where $0$, $I$ are the zero and identity operators, respectively. An effect describes $a$ two-valued true-false experiment and the set of effects is denoted by $\escript (H)$.
If an $a\in\escript (H)$ is true, then its \textit{complement} $a'=I-a$ is false. Let $(\Omega _A,\fscript _A)$ be a measurable space. An \textit{observable} with
\textit{outcome space} $(\Omega _A,\fscript _A)$ is a normalized effect-valued measure $A\colon\fscript _A\to\escript (H)$. That is, $\Delta\mapsto A(\Delta )$ is countably additive in the strong operator topology and $A(\Omega _A)=I$. We interpret $A(\Delta )$ as the effect that is true when a measurement of $A$ results in an outcome in $\Delta$. A \textit{state} for the system $S$ is an effect $\rho$ that satisfies $\trace (\rho )=1$. States describe the initial condition of the system and the set of states is denoted by $\sscript (H)$ \cite{bgl95,dl70,hz12,kra83,nc00}. If $\rho\in\sscript (H)$ and $A$ is an observable, the \textit{distribution} of $A$ in the state
$\rho$ is a probability measure given by
\begin{equation*}
\Phi _\rho ^A(\Delta )=\trace\sqbrac{\rho A(\Delta )}
\end{equation*}
We denote the set of observables by $\ob (H)$. If $\Delta\mapsto A(\Delta )$ is countably additive but $A(\Omega _A)$ need not be $I$, then $A$ is called a
\textit{sub-observable}. We denote the set of sub-observables by $\sob (H)$.

An effect $\rho$ is called a \textit{partial state} if $\trace (\rho )\le 1$. An \textit{operation} is a completely positive linear map $\oscript\colon\lscript (H)\to\lscript (H)$ such that $\trace\sqbrac{\oscript (\rho )}\le 1$ for all $\rho\in\sscript (H)$. If $\trace\sqbrac{\oscript (\rho )}=1$ for all $\rho\in\sscript (H)$ then $\oscript$ is called a
\textit{channel} \cite{bgl95,hz12,kra83,nc00}. We denote the set of operations by $\oscript (H)$. Let $(\Omega _\iscript ,\fscript _\iscript )$ be a measurable space. An instrument with \textit{outcome space} $(\Omega _\iscript ,\fscript _\iscript )$ is a normalized operation-valued measure $\iscript$. That is, for $\Delta\in\fscript _\iscript$, $\Delta\mapsto\iscript (\Delta )\in\oscript (H)$ is countably additive in the strong operation topology and $\iscriptbar =\iscript (\Omega _\iscript )$ is a channel
\cite{hz12,nc00}. We denote the set of instruments by $\instr (H)$. We interpret $\iscript\in\instr (H)$ as an apparatus that measures an observable $\iscripthat$ and updates states. If $\rho\in\sscript (H)$, then $\iscript (\Delta )(\rho )$ is a partial state and assuming $\iscript (\Delta )(\rho )\ne 0$, then
\begin{equation*}
\sqbrac{\iscript (\Delta )(\rho )}^\sim =\frac{1}{\trace\sqbrac{\iscript (\Delta )(\rho )}}\,\iscript (\Delta )(\rho )
\end{equation*}
is a state. We call $\sqbrac{\iscript (\Delta )(\rho )}^\sim$ the \textit{update} of $\rho$ given that a measurement of $\iscript$ results in an outcome in $\Delta$. If
$\rho\in\sscript (H)$, the \textit{distribution} of $\iscript\in\instr (H)$ is the probability measure given by
\begin{equation*}
\Phi _\rho ^\iscript (\Delta )=\trace\sqbrac{\iscript (\Delta )(\rho )}
\end{equation*}

The \textit{dual instrument} to $\iscript\in\instr (H)$ is the unique map $\iscript ^*(\Delta )\colon\lscript (H)\to\lscript (H)$, $\Delta\in\fscript _\iscript$, satisfying
\begin{equation}                
\label{eq21}
\trace\sqbrac{\rho\iscript ^*(\Delta )A}=\trace\sqbrac{\iscript (\Delta )(\rho )A}
\end{equation}
for all $A\in\lscript (H)$ \cite{gud22}. It follows that $\Delta\mapsto\iscript ^*(\Delta )$ is countable additive and $\iscript ^*(\Delta )$ is a completely positive linear map such that $\iscript ^*(\Delta )\colon\escript (H)\to\escript (H)$ for all $\Delta\in\fscript _\iscript$ and $\iscript ^*(\Omega _\iscript )I=I$ \cite{gud22}. If $a\in\escript (H)$, we define $\iscript _a^*(\Delta )=\iscript ^*(\Delta )(a)$. Then $\iscript _a^*\in\sob (H)$ and we say that $\iscript$ \textit{determines} $\iscript _a^*$. We interpret 
$\iscript _a^*(\Delta )\in\escript (H)$ as the update of the effect $a$ given that a measurement of $\iscript$ results in an outcome in $\Delta$. In this way, an instrument not only updates states, it also can be employed to update effects. For all $\rho\in\sscript (H)$ we have from \eqref{eq21} that
\begin{equation*}
\trace\sqbrac{\rho\iscript_a^*(\Delta )}=\trace\sqbrac{\rho\iscript ^*(\Delta )a}=\trace\sqbrac{\iscript (\Delta )(\rho )a}
   =\trace\sqbrac{\iscript (\Delta )(\rho )}\trace\sqbrac{\paren{\iscript (\Delta )(\rho )}^\sim a}
\end{equation*}
Thus, the probability of the effect $\iscript _a^*(\Delta )$ when $S$ is in the state $\rho$ is the probability that a measurement of $\iscript$ results in an outcome in
$\Delta$ times the probability of $a$ in the updated state $\paren{\iscript (\Delta )(\rho )}^\sim$. Notice that 
\begin{equation*}
\iscript _I^*(\Omega _\iscript)=\iscript ^*(\Omega _\iscript )(I)=I
\end{equation*}
so $\iscript _I^*$ is an observable. We call $\iscript _I^*$ the \textit{observable measured} by $\iscript$ and we write $\iscripthat=\iscript _I^*$. Notice that $\iscripthat$ is the unique observable satisfying
\begin{equation*}
\trace\sqbrac{\rho\iscripthat (\Delta )}=\trace\sqbrac{\iscript (\Delta )(\rho )}
\end{equation*}
for all $\Delta\in\fscript _\iscript$, $\rho\in\sscript (H)$. We have that $\iscript _a^*$ is an observable if and only if
\begin{equation*}
\iscript _a^*(\Omega _\iscript )=\iscript ^*(\Omega _\iscript )(a)=I
\end{equation*}
which is equivalent to
\begin{equation}                
\label{eq22}
\trace\sqbrac{\iscript (\Omega _\iscript )(\rho )a}=\trace\sqbrac{\rho\iscript ^*(\Omega _\iscript )a}=\trace (\rho I)=1
\end{equation}

We now present some examples that illustrate the previous definitions. An observable $A$ is \textit{finite} if $\Omega _A$ is a finite set. In this case we assume that
$\fscript _A=2^{\Omega _A}$ so we need not specify the $\sigma$-algebra $\fscript _A$. We then write $A=\brac{a_x\colon x\in\Omega _A}$ and we have that
\begin{equation*}
A(\Delta )=\sum _{x\in\Delta}a_x
\end{equation*}
for all $\Delta\subseteq\Omega _A$. Corresponding to a finite observable $A$ we have the \textit{L\"uders instrument} $\lscript _x(\rho )=a_x^{1/2}\rho a_x^{1/2}$ for all $\rho\in\sscript (H)$, $x\in\Omega _A$ \cite{lud51}. It follows that
\begin{equation*}
\lscript (\Delta )(\rho )=\sum _{x\in\Delta}\lscript _x(\rho )=\sum _{x\in\Delta}a_x^{1/2}\rho a_x^{1/2}
\end{equation*}
for all $\rho\in\sscript (H)$, $\Delta\subseteq\Omega _A=\Omega _\lscript$. The dual instrument satisfies $\lscript _x^*(b)=a^{1/2}ba^{1/2}$ for all $x\in\Omega _A$,
$b\in\escript (H)$ \cite{gud22}. The sub-observables determined by $\lscript$ have the form $\lscript _b^*$, $b\in\escript (H)$ where
\begin{equation*}
\lscript _b^*(\Delta )=\sum _{x\in\Delta}a_x^{1/2}ba_x^{1/2}
\end{equation*}
If $\lscript _b^*\in\ob (h)$, we have that
\begin{equation*}
I=\lscript _b^*(\Omega _A)=\sum _{x\in\Omega _A}a_x^{1/2}ba_x^{1/2}
\end{equation*}
It follows that $b=I$. Hence,
\begin{equation*}
\lscripthat (\Delta )=\lscript _I^*(\Delta )=\sum _{x\in\Delta}a_x=A(\Delta )
\end{equation*}
for all $\Delta\subseteq\Omega _A$ so $\lscripthat =A$ is the only observable determined by $\lscript$.

A \textit{Holevo instrument with state} $\alpha$ and \textit{observable} $A$ has the form
\begin{equation*}
\hscript _{(\alpha ,A)}(\Delta )(\rho )=\trace\sqbrac{\rho A(\Delta )}\alpha
\end{equation*}
for all $\rho\in\sscript (H)$, $\Delta\in\fscript _A$ \cite{hol94}. The sub-observables determined by $\hscript _{(\alpha ,A)}$ become
\begin{equation*}
(\hscript _{(\alpha ,A)}^*)_a(\Delta )=\hscript _{(\alpha ,A)}^*(\Delta )(a)=\trace (\alpha a)A(\Delta )
\end{equation*}
Then $(\hscript {(\alpha ,A)}^*)_a$ is an observable if and only if
\begin{equation*}
\trace (\alpha a)A(\Omega _A)=\trace (\alpha a)I=I
\end{equation*}
This is equivalent to $\trace (\alpha a)=1$ which is equivalent to $(\hscript _{(\alpha ,A)}^*)_a(\Delta )=A(\Delta )$. Thus, $A=(\hscript _{(\alpha ,A)}^*)_I$ is the only observable determined by $\hscript _{(\alpha ,A)}$.

Let $A=\brac{a_x\colon x\in\Omega _A}$ be a finite observable and $\brac{\alpha _x\colon x\in\Omega _A}\subseteq\sscript (H)$. A
\textit{finite Holevo instrument} with \textit{states} $\brac{\alpha _x\colon x\in\Omega _A}$ and \textit{observable} $A$ has the form
\begin{equation*}
\hscript _{(\alpha ,A)}(x)(\rho )=\trace (\rho a_x)\alpha _x
\end{equation*}
The sub-observables determined by $\hscript _{(\alpha ,A)}$ becomes
\begin{equation*}
(\hscript _{(\alpha ,A)}^*)_a(x)=\hscript _{(\alpha ,A)}^*(x)(a)=\trace (\alpha _xa)a_x
\end{equation*}
We see that $(\hscript _{(\alpha ,A)}^*)_a$ is an observable if and only if
\begin{equation*}
(\hscript _{(\alpha ,A)}^*)_a(\Omega _A)=\sum _{x\in\Omega _A}\trace (\alpha _xa)a_x=I
\end{equation*}
which is equivalent to $\trace (\alpha _xa)=1$ for all $x\in\Omega _A$. Again, $A$ is the only observable determined by $\hscript _{(\alpha ,A)}$.
$\hscript _{(\alpha ,A)}$ is also called a \textit{conditional state preparator} \cite{hz12}.

A \textit{constant-state} instrument has the form $\iscript _\alpha (\Delta )(\rho )=\iscript (\Delta )(\alpha )$ where $\alpha\in\sscript (H)$, $\iscript\in\instr (H)$
\cite{gud22}. Since
\begin{equation*}
\trace\sqbrac{\rho\iscript _\alpha ^*(\Delta )(a)}=\trace\sqbrac{\iscript _\alpha (\Delta )(\rho )a}=\trace\sqbrac{\iscript (\Delta )(\alpha )a}
\end{equation*}
for all $\rho\in\sscript (H)$, we conclude that $\iscript _\alpha ^*(\Delta )(a)=\trace\sqbrac{\iscript (\Delta )(\alpha )a}I$ for all $\Delta\in\fscript _\iscript$. Hence, for all $a\in\escript (H)$ we obtain
\begin{equation*}
(\iscript _\alpha ^*)_a(\Delta )=\iscript _\alpha ^*(\Delta )(a)=\trace\sqbrac{\iscript (\Delta )(\rho )a}I
\end{equation*}
It follows that $(\iscript _\alpha ^*)_a$ is an observable if and only if
\begin{equation}                
\label{eq23}
\trace\sqbrac{\iscriptbar (\alpha )a}=1
\end{equation}
There can be many $a\in\escript (H)$ that satisfy \eqref{eq23}. For example, suppose $\iscriptbar (\rho )=\sum P_i\rho P_i$ where $P_i=\ket{\psi _i}\bra{\psi _i}$ and
$\brac{\psi _i}$ is an orthonormal basis for $H$. Also, suppose $\alpha =\ket{\psi _1}\bra{\psi _1}$. Then $\iscriptbar (\alpha )=P_1$ and we have
\begin{equation*}
\trace\sqbrac{\iscriptbar (\alpha )a}=\trace\sqbrac{\ket{a\psi _1}\bra{\psi _1}}=\elbows{\psi _1,a\psi _1}
\end{equation*}
Then $\elbows{\psi _1,a\psi _1}=1$ if and only if $a\psi _1=\psi _1$ and there are many $a\in\escript (H)$ that satisfy this. We conclude that $\iscript _\alpha$ can determine many sub-observables. The unique observable measured by $\iscript _\alpha$ is $\iscripthat _\alpha =(\iscript _\alpha ^*)_I$ where
\begin{equation*}
\iscripthat _\alpha (\Delta )=\trace\sqbrac{\iscript (\Delta )(\alpha )}I
\end{equation*}

We now discuss the algebraic structure of $\escript (H)$. An \textit{effect algebra} is a four-tuple $(E,0,1,\oplus )$ where $E$ is a set, $0,1$ are elements of $E$ and
$\oplus$ is a partial binary operation on $E$ \cite{gg02,gud120,gud220,hz12}. When $a\oplus b$ is defined, we say that $a\oplus b$ \textit{exists} and write
$a\perp b$. An effect algebra satisfies the following axioms:
\begin{list}
{(E\arabic{cond})}{\usecounter{cond}
\setlength{\rightmargin}{\leftmargin}}
\item If $a\perp b$, then $b\perp a$ and $a\oplus b=b\oplus a$.
\item If $a\perp b$, $c\perp (a\oplus b)$, then $b\perp c$, $a\perp (b\oplus c)$ and
$a\oplus (b\oplus c)=(a\oplus b)\oplus c$.
\item If $a\in E$, there exists a unique $a'\in E$ such that $a'\perp a$ and $a\oplus a'=1$.
\item If $a\perp 1$, then $a=0$.
\end{list}
It is easy to check that $\paren{\escript (H),0,I,\oplus}$ is an effect algebra where $a\perp b$ when $a+b\le I$ and in this case we define $a\oplus b=a+b$. We interpret the effect $a\oplus b$ to be a parallel stochastic sum of $a$ and $b$.

For $a,b\in\escript (H)$ we define their \textit{standard sequential product} by $a\circ b=a^{1/2}ba^{1/2}$ \cite{gg02,gn01}. We interpret $a\circ b$ as the effect that results from first measuring $a$ and then measuring $b$. In this way, the measurement of $a$ can interfere with the measurement of $b$ but not vice-versa. It is shown in \cite{gn01} that $a\circ b=b\circ a$ if and only if $ab=ba$. Most of the following properties of the standard sequential product are straightforward to show
\cite{gg02}.

\begin{lem}    
\label{lem21}
{\rm{(1)}}\enspace $a\circ (b\oplus c)=a\circ b\oplus a\circ c$.
{\rm{(2)}}\enspace $I\circ a=a\circ I=a$.
{\rm{(3)}}\enspace If $a\circ b=0$, then $ab=ba$.
{\rm{(4)}}\enspace  If $ab=ba$, then $a\circ (b\circ c)=(a\circ b)\circ c$ for all $c\in\escript (H)$.
{\rm{(5)}}\enspace If $ac=ca$ and $bc=cb$, then $c(a\circ b)=(a\circ b)c$ and $c(a\oplus b)=(a\oplus b)c$.
{\rm{(6)}}\enspace $a\circ b\le a$ for all $a,b\in\escript (H)$.
{\rm{(7)}}\enspace If $a\le b$, then $c\circ a\le c\circ b$ for all $c\in\escript (H)$.
\end{lem}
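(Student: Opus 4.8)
The plan is to work directly from the definition $a\circ b=a^{1/2}ba^{1/2}$, leaning on two elementary facts about positive operators: the conjugation map $x\mapsto a^{1/2}xa^{1/2}$ is positive and order-preserving, and for commuting positive operators the square roots commute with $b$ (as a norm-limit of polynomials) with $a^{1/2}b^{1/2}=(ab)^{1/2}$. With these in hand, parts (1), (2), (6), and (7) are immediate. For (1), since $b\oplus c=b+c$ when $b\perp c$, linearity gives $a^{1/2}(b+c)a^{1/2}=a^{1/2}ba^{1/2}+a^{1/2}ca^{1/2}$, and this lies in $\escript(H)$ because it equals $a^{1/2}(b+c)a^{1/2}\le a^{1/2}Ia^{1/2}=a\le I$, so indeed $a\circ b\perp a\circ c$ and the two sides agree. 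Part (2) follows from $I^{1/2}=I$ and $a^{1/2}a^{1/2}=a$. For (6) one has $a\circ b=a^{1/2}ba^{1/2}\le a^{1/2}Ia^{1/2}=a$ using $b\le I$, and for (7), $c\circ b-c\circ a=c^{1/2}(b-a)c^{1/2}\ge 0$ since $b-a\ge 0$.

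For (3) I would use a factorization trick. Writing $b=(b^{1/2})^2$, the hypothesis $a^{1/2}ba^{1/2}=0$ reads $(b^{1/2}a^{1/2})^*(b^{1/2}a^{1/2})=0$, which forces $b^{1/2}a^{1/2}=0$ and, taking adjoints, $a^{1/2}b^{1/2}=0$. Then $ab=a^{1/2}(a^{1/2}b^{1/2})b^{1/2}=0$ and likewise $ba=0$, so $ab=ba$. For (5), since $c$ commutes with $a$ it commutes with $a^{1/2}$, and it also commutes with $b$; hence $c(a^{1/2}ba^{1/2})=a^{1/2}(cb)a^{1/2}=a^{1/2}(bc)a^{1/2}=(a^{1/2}ba^{1/2})c$, giving the first assertion, while $c(a+b)=ca+cb=ac+bc=(a+b)c$ gives the second.

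The crux of the lemma is (4). The key observation is that when $ab=ba$ the sequential product collapses to the ordinary product: $a^{1/2}$ commutes with $b$, so $a\circ b=a^{1/2}ba^{1/2}=ab$, and since $a,b$ are commuting positives, $a^{1/2}b^{1/2}=(ab)^{1/2}$. Therefore $(a\circ b)\circ c=(ab)^{1/2}c\,(ab)^{1/2}=a^{1/2}b^{1/2}c\,b^{1/2}a^{1/2}=a^{1/2}(b^{1/2}cb^{1/2})a^{1/2}=a^{1/2}(b\circ c)a^{1/2}=a\circ(b\circ c)$. The main obstacle is simply the careful justification of these square-root identities for commuting positive operators; once they are granted, the associativity telescopes in one line and all remaining verifications are routine.
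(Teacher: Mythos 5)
Your proof is correct and complete; the paper itself supplies no argument for Lemma~\ref{lem21}, simply calling the properties ``straightforward to show'' and citing \cite{gg02}. The route you take --- positivity and order-preservation of the conjugation $x\mapsto a^{1/2}xa^{1/2}$ for (1), (2), (6), (7), the factorization $a^{1/2}ba^{1/2}=(b^{1/2}a^{1/2})^*(b^{1/2}a^{1/2})=0\Rightarrow b^{1/2}a^{1/2}=0$ for (3), commutation of $a^{1/2}$ with anything commuting with $a$ for (5), and the identity $a^{1/2}b^{1/2}=(ab)^{1/2}$ for commuting positives in (4) --- is exactly the standard argument behind that citation, so nothing further is needed.
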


Notice that if $A=\brac{a_x\colon x\in\Omega _A}$ is a finite observable, then the corresponding L\"uders instrument has the form
\begin{equation*}
\lscript (\Delta )(\rho )=\sum _{x\in\Delta }a_x\circ\rho
\end{equation*}
and its determined sub-observables are given by
\begin{equation*}
\lscript _b^*(\Delta )=\sum _{x\in\Delta}a_x\circ b
\end{equation*}

\section{Sub-Observables}  
Let $A$ be a sub-observable that is not an observable and let $a=A(\Omega _A)\ne I$. For the outcome space $(\Omega _A,\fscript _A)$, let $y\notin\Omega _A$ and define $\Omega _B=\Omega _A\cup\brac{y}$, $\fscript _B=\fscript _A\cup\brac{\Delta\cup\brac{y}\colon\Delta\in\fscript _A}$. Then $\fscript _B$ is a $\sigma$-algebra of subsets of $\Omega _B$ and we call $(\Omega _B,\fscript _B)$ the \textit{one-point extension} of $(\Omega _A,\fscript _A)$. For $\Gamma\in\fscript _B$ define
$B(\Gamma )=A(\Gamma )$ if $\Gamma\in\fscript _A$ and $B(\Gamma )=A(\Delta )+a'$ if $\Gamma=\Delta\cup\brac{y}$, $\Delta\in\fscript _A$. Then $B$ is an observable with outcome space $(\Omega _B,\fscript _B)$ that we call the \textit{minimal extension} of $A$. Notice that if $A=\ob (H)$, then $B$ is essentially the same as $A$. For a simple example, if $a\in\escript (H)$ with $a\ne 0,I$, then $A_x=a$ is a sub-observable with outcome space $\brac{\brac{x},\brac{\emptyset ,\brac{x}}}$. The minimal extension of $A$ is $B=\brac{B_x,B_y}$ where $B_x=a$, $B_y=a'$ and the outcome space is
$\brac{\brac{x,y},\brac{\emptyset ,\brac{x},\brac{y},\brac{x,y}}}$.

A \textit{sub-instrument} $\iscript$ satisfies the conditions for an instrument except $\iscript (\Omega _\iscript )$ need not be a channel. Let the Kraus decomposition of $\iscript (\Omega _\iscript )$ be $\iscript (\Omega )(\rho )=\sum C_i\rho C_i^*$ \cite{hz12,kra83,nc00} where $C_i\in\lscript (H)$ and $\sum C_i^*C_i=D<I$. Notice that $D\in\escript (H)$. Let $(\Omega _\jscript ,\fscript _\jscript )$ be the one-point extension of $(\Omega _\jscript ,\fscript _\jscript )$ and define
$\jscript (\Delta )=\iscript (\Delta )$ if $\Delta\in\Omega _\jscript$ and
\begin{equation*}
\jscript \paren{\Delta\cup\brac{y}}(\rho )=\iscript (\Delta )(\rho )+(I-D)^{1/2}\rho (I-D)^{1/2}=\iscript (\Delta )(\rho )+D'\circ\rho
\end{equation*}
Then $\jscript$ is an instrument called the \textit{minimal extension} of $\iscript$.

Let $A\in\sob (H)$ with outcome space $(\Omega _A,\fscript _A)$ and let $A_1$ be the minimal extension of $A$ with outcome space
$(\Omega _{A_1},\fscript _{A_1})$. Let $\jscript _1$ be an instrument with outcome space $(\Omega _{A_1},\fscript _{A_1})$ such that $\jscripthat _1=A_1$. (Such an instrument exists although it need not be unique.) Then for all $\Delta\in\fscript _{A_1}$ we have
\begin{equation*}
\trace\sqbrac{\jscript _1(\Delta )(\rho )}=\trace\sqbrac{\rho A_1(\Delta )}
\end{equation*}
for all $\rho\in\sscript (H)$. Hence, for all $\Delta\in\fscript _A$ we have $\trace\sqbrac{\jscript _1(\Delta )(\rho )}=\trace\sqbrac{\rho A(\Delta )}$ for all
$\rho\in\sscript (H)$. For example, let $\iscript _a^*$ be a sub-observable determined by $\iscript\in\instr (H)$. Let $\iscript _{a,1}^*$ be the minimal extension of
$\iscript _a^*$ with outcome space $(\Omega _1,\fscript _1)$ and let $\jscript _1$ be an instrument with outcome space $(\Omega _1,\fscript _1)$ such that
$(\Omega _1,\fscript _1)$ and $\jscripthat _1=\iscript _{a,1}^*$. Then for every $\Delta\in\fscript _\iscript$ we have
\begin{equation*}
\trace\sqbrac{\jscript _1(\Delta )\rho}=\trace\sqbrac{\rho\iscript _a^*(\Delta )}
\end{equation*}

Let $A,B\in\sob (H)$ and let $A_1$ be the minimal extension of $A$. If $\iscript\in\instr (H)$ satisfies $\iscripthat =A_1$, then define the
$\iscript$-\textit{sequential product of} $A$ \textit{then} $B$ \textit{with outcome space} $(\Omega _A\times\Omega _B,\fscript _A\times\fscript _B)$ to be the sub-observable $A\sqbrac{\iscript}B=\iscript ^*(B)$. This is shorthand notation for
\begin{equation*}
A\sqbrac{\iscript}B(\Delta\times\Gamma )=\iscript ^*(B)(\Delta\times\Gamma )=\iscript ^*(\Delta )\paren{B(\Gamma )}
\end{equation*}
for all $\Delta\in\fscript _A$, $\Gamma\in\fscript _B$ \cite{gud120,gud220,gud21}. We also define $B$ $\iscript$-\textit{conditioned by} $A$ to be the sub-observable with \textit{outcome space} $(\Omega _B,\fscript _B)$ given by
\begin{equation*}
(B\mid\iscript\mid A)(\Gamma )=\iscriptbar\,^*\paren{B(\Gamma )}=A\sqbrac{\iscript}B(\Omega _A\times\Gamma )
\end{equation*}
for all $\Gamma\in\fscript _B$ \cite{gud120,gud220,gud21}.

\begin{exam}{1}  
Let $A=\brac{a_x\colon x\in\Omega}$ be a finite observable and let $\lscript$ be the L\"uders instrument given by $\lscript _x(\rho )=a_x\circ\rho$. Then for
$b\in\escript (H)$ we have $\lscript _b^*(\Delta )=\sum\limits _{x\in\Delta}a_x\circ b$ and we define $b_1\in\escript$ by
\begin{equation*}
b_1=\lscript _b^*(\Omega )=\sum _{x\in\Omega}a_x\circ b
\end{equation*}
Let $\Omega _1=\Omega\cup\brac{y}$ be the one-point extension of $\Omega$ and the minimal extension of $\lscript _b^*$ satisfies $\lscript _{b,1}^*(x)=a_x\circ b$ for $x\in\Omega$, $\lscript _{b,1}^*(y)=b'$ and
\begin{equation*}
\lscript _{b,1}^*(\Delta\cup\brac{y})=\sum _{x\in\Delta}a_x\circ b+b'_1
\end{equation*}
for $\Delta\subseteq\Omega$. Let $\jscript$ be the instrument on $\Omega _1$ given by
\begin{equation*}
\jscript (x)(\rho )=(a_x\circ b)^{1/2}\rho (a_x\circ b)^{1/2}=(a_x\circ b)\circ\rho
\end{equation*}
for $x\in\Omega$ and $\jscript (y)(\rho )=(b'_1)\circ\rho$. Then $\jscripthat =\lscript _{b,1}^*$ so $\jscript$ measures $\lscript _{b,1}^*$. The $\jscript$-sequential product of $\lscript _b^*$ then $\lscript _c^*$ becomes \cite{gud21}
\begin{equation*}
\lscript _b^*\sqbrac{\jscript}\lscript _c^*=\jscript ^*(\lscript _c^*)
\end{equation*}
We then obtain
\begin{align*}\lscript _b^*\sqbrac{\jscript}\lscript _c^*(x,z)&=\jscript ^*(\lscript _c^*)(x,z)=\jscript^*(x)\sqbrac{\lscript _c^*(z)}=\jscript^*(x)(a_z\circ c)\\
   &=(a_x\circ b)\circ (a_z\circ c)
\end{align*}
Moreover \cite{gud120,gud220},
\begin{align*}
\paren{\lscript _c^*\mid\jscript\mid\lscript _b^*}(z)&=\lscript _b^*\sqbrac{\jscript}(\Omega\times z)=\sum _{x\in\Omega}\jscript ^*(\lscript _c^*)(x,z)\\
   &=\sum _{x\in\Omega}(a_x\circ b)\circ (a_z\circ c)
\end{align*}
Another instrument that measures $\lscript _{b,1}^*$ is the finite Holevo instrument $\hscript _{(\alpha ,\lscript _{b,1}^*)}$ with states $\alpha _x$ and observable $A$. We then have
\begin{align*}
\lscript _b^*\sqbrac{\hscript _{(\alpha ,\lscript _{b,1}^*)}}\lscript _c^*(x,z)&=\hscript _{(\alpha ,\lscript _{b,1}^*)}(x)(\lscript _c^*)(z)
       =\hscript _{(\alpha ,\lscript _{b,1}^*)}(x)(a_z\circ c)\\
       &=\trace\sqbrac{\alpha _x(a_z\circ c)}\sqbrac{\lscript _{b,1}^*(x)}=\trace\sqbrac{\alpha _x(a_z\circ c)}a_x\circ b
\end{align*}
Moreover,
\begin{align*}
\paren{\lscript _c^*\mid\hscript _{(\alpha ,\lscript _{b,1}^*)}\mid\lscript _b^*}(z)
   &=\sum _{x\in\Omega}\lscript _b^*\sqbrac{\hscript _{(\alpha ,\lscript _{b,1}^*)}}\lscript _c^*(x,z)\\
   &=\sum _{x\in\Omega}\trace\sqbrac{\alpha _x(a_z\circ c)}a_x\circ b\hskip 10pc\square
\end{align*}
\end{exam}

\begin{exam}{2}  
Let $\hscript _{(\alpha ,A)}(\Delta )(\rho )=\trace\sqbrac{\rho A(\Delta )}\alpha$ be a Holevo instrument and let 
\begin{equation*}
(\hscript _{(\alpha ,A)}^*)_a(\Delta )=\trace (\alpha a)A(\Delta )
\end{equation*}
be a sub-observable determined by $\hscript _{(\alpha ,A)}$. The minimal extension of $(\hscript _{(\alpha ,A)}^*)_a$ satisfies
\begin{equation*}
D\paren{\Delta\cup\brac{y}}=(\hscript _{(\alpha ,A),1}^*)_a\paren{\Delta\cup\brac{y}}=\trace (\alpha a)A(\Delta )+\sqbrac{1-\trace (\alpha a)}I
\end{equation*}
Notice that $D$ is additive because if $\Delta\cap\Gamma =\emptyset$ and $y\notin\Delta\cup\Gamma$, then
\begin{align*}
D\paren{\Delta\cup\brac{y}\cup\Gamma}&=\trace (\alpha a)A(\Delta\cup\Gamma )+\sqbrac{1-\trace (\alpha a)}\\
   &=\trace (\alpha a)\sqbrac{A(\Delta )+A(\Gamma )}+\sqbrac{1-\trace (\alpha a)}I\\
   &=D\paren{\Delta\cup\brac{y}}+D(\Gamma )
\end{align*}
Since $\hscript _{(\beta ,D)}$ measures $D$ we have
\begin{align*}
(\hscript _{(\alpha ,A)}^*)_a\sqbrac{\hscript _{(\beta ,D)}}(\hscript _{(\alpha ,A)}^*)_b(\Delta\times\Gamma )
   &=\hscript _{(\beta ,D)}^*(\Delta )\sqbrac{(\hscript _{(\alpha ,A)}^*)_b(\Gamma )}\\
   &=\trace\sqbrac{\beta (\hscript _{(\alpha ,A)}^*)_b(\Gamma )}D(\Delta )\\
   &=\trace\sqbrac{\beta\trace (\alpha b)A(\Gamma )}\trace (\alpha a)A(\Delta )\\
   &=\trace (\alpha b)\trace (\alpha a)\trace\sqbrac{\beta A(\Gamma )}A(\Delta )
\end{align*}
Moreover,
\begin{align*}
(\hscript _{(\alpha ,A)}^*)_b\mid\hscript _{(\beta ,D)}\mid (\hscript _{\alpha ,A)}^*)_a(\Gamma )
   &=(\hscript _{(\alpha ,A)}^*)_a\sqbrac{\hscript _{(\beta D)}}(\hscript _{(\alpha ,A)}^*)_b(\Omega _A\times\Gamma )\\
   &=\trace (\alpha ,b)\trace (\alpha a)\trace\sqbrac{\beta A(\Gamma )}I\hskip 6pc\square
\end{align*}
\end{exam}

If $A\in\sob (H)$ and $\Delta _1,\Delta _2\in\fscript _A$ with $\Delta _1\subseteq\Delta _2$, then $A(\Delta _1)\le A(\Delta _2)$ and in particular,
$A(\Delta )\le A(\Omega _A)$ for all $\Delta\in\fscript _A$. For $A,B\in\sob (H)$ we write $A\le B$ if $(\Omega _A,\fscript _A)=(\Omega _B,\fscript _B)$ and
$A(\Delta )\le B(\Delta )$ for all $\Delta\in\fscript _A$. Let $\uscript\subseteq\sob (H)$ and assume that all elements of $\uscript$ have the same outcome set
$(\Omega ,\fscript )$. We call $\uscript$ a $\sob$ \textit{effect algebra} if:
\begin{list}
{(S\arabic{cond})}{\usecounter{cond}
\setlength{\rightmargin}{\leftmargin}}
\item there exists an observable $Z\in\uscript$,
\item if $A\in\uscript$ then $A'=Z-A\in\uscript$,
\item if $A,B\in\uscript$ and $A+B\in\sob (H)$, then $A+B\in\uscript$.
\end{list}
Notice that $0\in\uscript$ because $0=Z-Z\in\uscript$. Also, $Z=A$ is the only observable in $\uscript$. Indeed, suppose $A$ is an observable in $\uscript$, then
$A'=Z-A\in\uscript$. Since
\begin{equation*}
A'(\Omega )=Z(\Omega )-A(\Omega )=I-I=0
\end{equation*}
we conclude that $A'(\Delta )\le A'(\Omega )=0$ for all $\Delta\in\fscript$. Hence, $A'=0$ so $A=Z$. If $A,B\in\sob (H)$ with $A+B\in\sob (H)$, we write $A\perp B$. When $A\perp B$ we define $A\oplus B=A+B$ and say that $A\oplus B$ \textit{exists}.

\begin{thm}    
\label{thm31}
If $\uscript$ is a $\sob$ effect algebra, then $(\uscript, 0,Z,\oplus )$ is an effect algebra.
\end{thm}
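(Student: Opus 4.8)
The plan is to verify the four effect-algebra axioms (E1)--(E4) for $(\uscript ,0,Z,\oplus )$ directly, relying throughout on two elementary facts about any sub-observable: positivity, $A(\Delta )\ge 0$, and monotonicity, $A(\Delta )\le A(\Omega )$ for every $\Delta\in\fscript$. Recall that $A\perp B$ means $A+B\in\sob (H)$, i.e.\ $A(\Delta )+B(\Delta )\le I$ for all $\Delta$; monotonicity shows that this perpendicularity is really governed by the value at $\Omega$. The role of axiom (S3) is to guarantee that whenever $A\perp B$ with $A,B\in\uscript$ the sum $A\oplus B=A+B$ stays inside $\uscript$, so that $\oplus$ is a genuine partial operation on $\uscript$ and not merely on $\sob (H)$.

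Axiom (E1) is immediate, since operator addition is commutative: $A+B\in\sob (H)$ if and only if $B+A\in\sob (H)$, giving $A\perp B\Leftrightarrow B\perp A$ and $A\oplus B=B\oplus A$. For (E2), suppose $A\perp B$ and $C\perp (A\oplus B)$, so that $A+B+C\in\sob (H)$. Then for each $\Delta$ positivity gives $B(\Delta )+C(\Delta )\le A(\Delta )+B(\Delta )+C(\Delta )\le I$, so $B\perp C$; similarly $A+(B+C)=A+B+C\in\sob (H)$ gives $A\perp (B\oplus C)$, and (S3) places $B\oplus C$ together with all the combined sums back in $\uscript$. Associativity of the operator sum then yields $A\oplus (B\oplus C)=A+B+C=(A\oplus B)\oplus C$.

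For (E3) the orthosupplement is furnished by (S2): given $A\in\uscript$ set $A'=Z-A\in\uscript$. Since $Z$ is an observable we have $A+A'=Z\in\ob (H)\subseteq\sob (H)$, so $A'\perp A$ and $A\oplus A'=Z$; and if $B\in\uscript$ satisfies $A\oplus B=A+B=Z$ then $B=Z-A=A'$, giving uniqueness. In particular $Z'=Z-Z=0$, so the designated zero of the tuple is the complement of $Z$, as it should be. Finally, (E4) uses positivity and monotonicity at $\Omega$: if $A\perp Z$, then evaluating at $\Omega$ gives $A(\Omega )+Z(\Omega )=A(\Omega )+I\le I$, forcing $A(\Omega )=0$ and hence $A(\Delta )\le A(\Omega )=0$ for all $\Delta$, so $A=0$. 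This is exactly the argument already used above to show that $Z$ is the only observable in $\uscript$.

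None of the four verifications is genuinely hard; the only point demanding attention is to check at each stage that the sums in question remain sub-observables (so that the relevant $\perp$ relations hold) and that (S2)--(S3) keep them inside $\uscript$. The real crux is the single observation that the order structure of each sub-observable is controlled by its value at the total outcome $\Omega$, which is what makes both the perpendicularity bookkeeping in (E2) and the zero-law (E4) follow at once from positivity and monotonicity.
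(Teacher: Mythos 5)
Your proof is correct and follows essentially the same route as the paper: a direct verification of the four axioms (E1)--(E4), with (S2) furnishing the complement $A'=Z-A$ for (E3) and (S3) keeping sums inside $\uscript$. The only local difference is at (E4), where the paper deduces $-A\in\uscript$ from $Z-(A\oplus Z)=-A$ and then concludes $A=0$, whereas you evaluate at $\Omega$ and use positivity and monotonicity of effect-valued measures --- a slightly more direct version of the same point; you also usefully make explicit the positivity argument behind $B\perp C$ in (E2), which the paper asserts without detail.
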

\begin{proof}
There are four conditions to be satisfied which we now check.\newline
(E1)\enspace If $A,B\in\uscript$ with $A\perp B$, then $B\perp A$ and $A\oplus B=B\oplus A=A+B\in\uscript$.\newline
(E2)\enspace If $A,B,C\in\uscript$ with $A\perp B$ and $C\perp (A\oplus B)$, then $(A\oplus B)\oplus C=A+B+C\in\uscript$,
Hence, $B\perp C$ and $A\perp (B\oplus C)$ so
\begin{equation*}
A\oplus (B\oplus C)=A+B+C=(A\oplus B)\oplus C
\end{equation*}
(E3)\enspace If $A\in\uscript$, then $A'=Z-A$ is the unique element of $\uscript$ satisfying $A\oplus A'=Z$.\newline
(E4)\enspace If $A\in\uscript$ and $A\perp Z$, then
\begin{equation*}
Z-(A\oplus Z)=Z-(A+Z)=-A
\end{equation*}
Hence, $-A\in\uscript$ and it follows that $A=0$.
\end{proof}

If $\iscript\in\instr (H)$ and $\uscript _\iscript =\brac{\iscript _a^*\colon a\in\escript (H)}$, then we conjecture that $\uscript$ need not be a $\sob$ effect algebra. However, we can change the definition of $\perp$ so that $\uscript _\iscript$ becomes an effect algebra. We write $\iscript _a^*\perp\iscript _b^*$ is $a\perp b$ and if $a\perp b$ we define $\iscript _a^*\oplus\iscript _b^*=\iscript _{(a+b)}^*$.

\begin{thm}    
\label{thm32}
If $\iscript\in\instr (H)$, then $(\uscript _\iscript ,0,\iscript _I^*,\oplus )$ is an effect algebra and $F(a)=\iscript _a^*$ is a morphism from $\escript (H)$ onto
$\uscript _\iscript$. Moreover, $(\iscript _a^*)'=\iscript _{a'}^*$.
\end{thm}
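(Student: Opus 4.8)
The plan is to view $F(a)=\iscript_a^*$ as a surjective additive map of the effect algebra $\escript(H)$ onto $\uscript_\iscript$ and to transport the four axioms across it. The engine is the linearity of $\iscript^*(\Delta)$ from Section~2: for each $\Delta\in\fscript_\iscript$ we have $\iscript^*(\Delta)(a+b)=\iscript^*(\Delta)(a)+\iscript^*(\Delta)(b)$ and $\iscript^*(\Delta)(0)=0$, so, computing pointwise on $\fscript_\iscript$, $\iscript_{a+b}^*=\iscript_a^*+\iscript_b^*$ and $\iscript_0^*=0$. Thus $0=\iscript_0^*=F(0)$ is the zero of $\uscript_\iscript$, the element $\iscript_I^*=F(I)$ is its distinguished unit, and $F$ is onto by the very definition of $\uscript_\iscript$. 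Since each $\iscript^*(\Delta)$ is also positive, hence order preserving, $a\le b$ forces $\iscript_a^*\le\iscript_b^*$, which I will use in (E4).

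Because $F$ need not be injective, I would first settle well-definedness. If $\iscript_a^*=\iscript_c^*$ and $\iscript_b^*=\iscript_d^*$, then $\iscript^*(\Delta)(a-c)=\iscript^*(\Delta)(b-d)=0$ for every $\Delta$, whence $\iscript^*(\Delta)\bigl((a+b)-(c+d)\bigr)=0$ and $\iscript_{a+b}^*=\iscript_{c+d}^*$. Reading $X\perp Y$ as the existence of orthogonal representatives in $\escript(H)$, this linear cancellation shows that the value $X\oplus Y$ is independent of the witnessing effects, and it is the device that replaces the uniqueness arguments of an ordinary effect algebra.

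With this in hand I would verify (E1)--(E4). (E1) follows from commutativity of $+$ and symmetry of $\perp$ in $\escript(H)$. For (E3) the candidate complement of $X=\iscript_a^*$ is $\iscript_{a'}^*$: from $a+a'=I$ we get $\iscript_a^*\oplus\iscript_{a'}^*=\iscript_{a+a'}^*=\iscript_I^*$, and if $\iscript_b^*$ is any orthocomplement of $X$ then $\iscript^*(\Delta)(a)+\iscript^*(\Delta)(b)=\iscript^*(\Delta)(I)=\iscript^*(\Delta)(a)+\iscript^*(\Delta)(a')$ for all $\Delta$; cancelling $\iscript^*(\Delta)(a)$ gives $\iscript_b^*=\iscript_{a'}^*$, which yields existence and uniqueness of the complement and simultaneously proves the final assertion $(\iscript_a^*)'=\iscript_{a'}^*$. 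For (E4), if $\iscript_a^*\perp\iscript_I^*$ I pick orthogonal representatives with sum $\tilde a+\tilde I\le I$ and $\iscript_{\tilde I}^*=\iscript_I^*$; then positivity together with $\iscript^*(\Delta)(\tilde I)=\iscript^*(\Delta)(I)$ gives $0\le\iscript^*(\Delta)(\tilde a)\le\iscript^*(\Delta)(I-\tilde I)=0$ for every $\Delta$, so $\iscript_a^*=0$.

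The morphism claim is then immediate: $F(I)=\iscript_I^*$ sends unit to unit, and for $a\perp b$ we have $F(a)\perp F(b)$ with $F(a+b)=\iscript_{a+b}^*=F(a)\oplus F(b)$, so $F$ is a surjective effect-algebra morphism. I expect the real obstacle to lie in (E2). Given $A\perp B$ and $C\perp(A\oplus B)$, the associativity equation itself is harmless, since both bracketings should collapse to $\iscript_{a+b+c}^*$ by associativity of $+$; the difficulty is that, $F$ being non-injective, an orthogonality holding in $\uscript_\iscript$ need not be witnessed by the obvious effects. One must use the cancellation of the second paragraph to adjust the witnessing effects to equivalent orthogonal ones and then confirm that $B\perp C$ and $A\perp(B\oplus C)$ genuinely hold in $\escript(H)$, and keeping these adjusted representatives positive and below $I$ is the delicate point of the whole proof.
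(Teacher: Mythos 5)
Your proposal has a genuine gap at (E2), and you in effect admit it: the axiom is described as "the delicate point" but never verified. Under your existential reading of $\perp$ (two elements of $\uscript _\iscript$ are orthogonal when \emph{some} pair of representing effects is orthogonal), the hypotheses hand you effects with $a+b\le I$ and $c+d\le I$ where $d$ is merely equivalent to $a+b$, i.e.\ $\iscript ^*(\Delta )(d)=\iscript ^*(\Delta )(a+b)$ for all $\Delta$. To conclude $B\perp C$ you must produce $b_1\sim b$ and $c_1\sim c$ with $b_1+c_1\le I$; the natural route is to split $d=a_1+b_1$ with $a_1\ge 0$, $a_1\sim a$ and $b_1\ge 0$, $b_1\sim b$, for then $b_1+c\le d+c\le I$ gives $B\perp C$ and $a_1+(b_1+c)=d+c\le I$ gives $A\perp (B\oplus C)$, with both bracketings collapsing to the class of $a+b+c$. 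But your cancellation lemma is purely linear: the common kernel $N=\brac{x\colon\iscript ^*(\Delta )(x)=0\text{ for all }\Delta}$ is just a subspace, and nothing provides a positivity-preserving (Riesz-type) decomposition of $d$ across $N$-cosets. Flagging this difficulty is not resolving it; as written you have (E1), (E3), (E4) (those arguments are correct, and the (E4) positivity argument is nice) but not (E2), and it is not even clear that (E2) is true for arbitrary $\iscript$ under your reading.

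The paper takes a different route that avoids this entirely: it \emph{defines} $\iscript _a^*\perp\iscript _b^*$ to mean $a\perp b$, so orthogonality is carried by the effect labels, and $\iscript _a^*\oplus\iscript _b^*=\iscript _{a+b}^*$. With that convention (E2) is one line --- $\iscript _c^*\perp (\iscript _a^*\oplus\iscript _b^*)$ means by definition $c\perp (a+b)$ in $\escript (H)$, whence $b\perp c$ and $a\perp (b+c)$ --- and (E3), (E4) are likewise transported verbatim from $\escript (H)$, which is why the paper's proof is short bookkeeping. Your worry about non-injectivity of $F$ is legitimate: the paper never checks that its label-based $\perp$ is well defined on the set $\uscript _\iscript$, and it can fail to be --- e.g.\ for a Holevo instrument $\hscript _{(\alpha ,A)}$ with $\alpha =\ket{\psi}\bra{\psi}$ on $\complex ^2$, $a=\tfrac{1}{2}I$ and $a_1=\mathrm{diag}(\tfrac{1}{2},1)$ satisfy $\iscript _a^*=\iscript _{a_1}^*$, yet $a\perp\tfrac{1}{2}I$ while $a_1\not\perp\tfrac{1}{2}I$. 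But the repair the paper implicitly relies on is to treat $\uscript _\iscript$ as indexed by $\escript (H)$, pushing the structure of $\escript (H)$ forward along $F$, not your existential relation. So either adopt the paper's convention, in which case your verifications collapse to the paper's and the theorem is immediate, or keep your reading and supply an actual proof (or counterexample) for (E2); the proposal in its current form does neither.
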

\begin{proof}
If $a\perp b$, then
\begin{equation*}
\iscript _{a+b}^*(\Delta )=\iscript ^*(\Delta )(a+b)=\iscript ^*(\Delta )a+\iscript ^*(\Delta )b=\iscript _a^*(\Delta )+\iscript _b^*(\Delta )
\end{equation*}
for all $\Delta\in\fscript _\iscript$. Hence, $\iscript _a^*\oplus\iscript _b^*=\iscript _{a+b}^*=\iscript _b^*+\iscript _b^*$. We now check the four conditions for an effect algebra.\newline
(E1)\enspace If $\iscript _a^*,\iscript _b^*\in\uscript _\iscript$ with $\iscript _a^*\perp\iscript _b^*$ we have $\iscript _b^*\perp\iscript _a^*$ and
\begin{equation*}
\iscript _b^*\oplus\iscript _a^*=\iscript _b+\iscript _a=\iscript _a+\iscript _b=\iscript _a^*\oplus\iscript _b^*
\end{equation*}
(E2)\enspace If $\iscript _a^*,\iscript _b^*,\iscript _c^*\in\uscript _\iscript$ with $\iscript _a^*\perp\iscript _b^*$ and $\iscript _c^*\perp (\iscript _a^*\oplus\iscript _b^*)$ then $a\perp b$ and $c\perp (a\oplus b)$. Hence, $a+b+c\in\escript (H)$ so $b\perp c$ and $a\perp (b\oplus c)$. Hence, $\iscript _b^*\perp\iscript _c^*$ and
$\iscript _a^*\perp (\iscript _b^*\oplus\iscript _c^*)$ and we have
\begin{equation*}
\iscript _a^*\oplus (\iscript _b^*\oplus\iscript _c^*)=\iscript _a^*+\iscript _b^*+\iscript _c^*=(\iscript _a^*\oplus\iscript _b^*)\oplus\iscript _c^*
\end{equation*}
(E3)\enspace If $\iscript _a^*\in\uscript _\iscript$, then $\iscript _{a'}^*\perp\iscript _a^*$ and $\iscript _a^*\oplus\iscript _{a'}^*=\iscript _{a+a'}^*=\iscript _I^*$.
If $\iscript _a^*\oplus\iscript _b^*=\iscript _I^*$, then $\iscript _b^*=\iscript _I^*-\iscript _a^*=\iscript _{a'}^*$ so $(\iscript _a^*)'=\iscript _{a'}^*$ is unique.\newline
(E4)\enspace If $\iscript _a^*\perp\iscript _I^*$, then $a\perp I$ so $a=0$ and hence, $\iscript _a^*=0$.\newline
To show that $F$ is a morphism, if $a\perp b$ we have
\begin{equation*}
F(a\oplus b)=\iscript _{a\oplus b}^*=\iscript _a^*\oplus\iscript _b^*=F(a)\oplus F(b)
\end{equation*}
Moreover, $F(I)=\iscript _I^*$ so $F$ is a morphism.
\end{proof}

We now show that certain subsets $\uscript _\iscript\subseteq\sob (H)$ are $\sob$ effect algebras.

\begin{thm}    
\label{thm33}
If $\iscript =\hscript _{(\alpha ,A)}$ is a Holevo instrument, then $\uscript _\iscript$ is a $\sob$ effect algebra.
\end{thm}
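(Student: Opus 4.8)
The plan is to identify $\uscript_\iscript$ explicitly as a one-parameter family of scalar multiples of the fixed observable $A$, and then read the three defining conditions of a $\sob$ effect algebra directly off that description. Recall from Section~2 that the sub-observables determined by $\hscript_{(\alpha,A)}$ are
\begin{equation*}
\paren{\hscript_{(\alpha,A)}^*}_a(\Delta)=\trace(\alpha a)A(\Delta),\qquad a\in\escript(H),
\end{equation*}
so every element of $\uscript_\iscript$ has the form $\lambda A$ for the scalar $\lambda=\trace(\alpha a)$. Since $\alpha$ is a state and $0\le a\le I$, we have $0\le\trace(\alpha a)\le\trace(\alpha)=1$, and conversely every $\lambda\in[0,1]$ is realized by the effect $a=\lambda I$, for which $\trace(\alpha\lambda I)=\lambda$. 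Hence I would establish that $\uscript_\iscript=\brac{\lambda A\colon\lambda\in[0,1]}$, the correspondence $\lambda\mapsto\lambda A$ being a bijection because $A(\Omega_A)=I\ne 0$.

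With this description in hand, each of the three conditions follows by inspection. For (S1) I would take $Z=A=\paren{\hscript_{(\alpha,A)}^*}_I$, which lies in $\uscript_\iscript$ and is an observable since $Z(\Omega_A)=I$. For (S2), the complement of a typical element $\lambda A$ is $Z-\lambda A=(1-\lambda)A$, again of the required form because $1-\lambda\in[0,1]$. For (S3), if $\lambda A$ and $\mu A$ both lie in $\uscript_\iscript$ and their sum $(\lambda+\mu)A$ happens to be a sub-observable, then evaluating at $\Omega_A$ forces $(\lambda+\mu)I\le I$, hence $\lambda+\mu\le 1$ and $(\lambda+\mu)A\in\uscript_\iscript$.

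The only point requiring a little care is the surjectivity half of the range computation: one must verify that $\trace(\alpha a)$ sweeps out all of $[0,1]$ rather than merely a subinterval, since it is exactly this that guarantees the closure steps in (S2) and (S3) land back inside $\uscript_\iscript$. This is where the special structure of the Holevo instrument is used—the determined sub-observables collapse onto a single scalar-parametrized line through $A$, so that $\uscript_\iscript$ is order-isomorphic to the standard unit-interval effect algebra $[0,1]$—and it is precisely the feature one would not expect for a general $\iscript$, consistent with the conjecture recorded just before the theorem.
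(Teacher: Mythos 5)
Your proof is correct and follows essentially the same route as the paper's: the paper likewise verifies (S1)--(S3) directly from $(\hscript _{(\alpha ,A)}^*)_a=\trace (\alpha a)A$, and its key step for (S3) is exactly your surjectivity device, taking $c=\sqbrac{\trace (\alpha a)+\trace (\alpha b)}I$ so that $\iscript _c^*=\iscript _a^*+\iscript _b^*$. Your up-front identification $\uscript _\iscript =\brac{\lambda A\colon\lambda\in [0,1]}$, with complements realized by $(1-\lambda )I$ where the paper simply uses $a'$, is a clean repackaging of the same computation rather than a different argument.
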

\begin{proof}
We have that $\uscript _\iscript=\brac{\iscript _a^*\colon a\in\escript (H)}$. We now check the three conditions for a $\sob$ effect algebra.\newline
(1)\enspace $\iscript _I^*\in\uscript _\iscript$ and $\iscript _I^*\in\ob (H)$.\newline
(2)\enspace If $\iscript _a^*\in\uscript _\iscript$, then for all $\Delta\in\fscript _\iscript$ we obtain
\begin{equation*}
\iscript _I^*(\Delta )-\iscript _a^*(\Delta )=\iscript ^*(\Delta )(I)-\iscript ^*(\Delta )(a)=\iscript ^*(\Delta )(I-a)=\iscript ^*(\Delta )(a')=\iscript _{a'}^*(\Delta )
\end{equation*}
Hence, $\iscript _I^*-\iscript _a^*=\iscript _{a'}^*\in\uscript _\iscript$.\newline
(3)\enspace Let $\iscript _a^*,\iscript _b^*\in\uscript _\iscript$ and suppose $\iscript _a^*+\iscript _b^*\in\sob (H)$. Then for all $\Delta\in\fscript _\iscript$ we have
\begin{align*}
(\iscript _a^*+\iscript _b^*)(\Delta )&=\iscript _a^*(\Delta )+\iscript _b^*(\Delta )=\iscript ^*(\Delta )(a)+\iscript ^*(\Delta )(b)
    =\trace (\alpha a)A(\Delta )+\trace (\alpha b)A(\Delta )\\
    &=\sqbrac{\trace (\alpha a)+\trace (\alpha b)}A(\Delta )
\end{align*}
It follows that $0\le\trace (\alpha a)+\trace (\alpha b)\le 1$. If $C=\sqbrac{\trace (\alpha a)+\trace (\alpha b)}I$ we have that $C\in\escript (H)$ and
\begin{equation*}
\iscript _c^*(\Delta )=\iscript (\Delta )(c)=\trace (\alpha c)A(\Delta )=\sqbrac{\trace (\alpha a)+\trace (\alpha b)}A(\Delta )=(\iscript _a^*+\iscript _b^*)(\Delta )
\end{equation*}
Therefore, $\iscript _a^*+\iscript _b^*=\iscript _c^*\in\uscript _\iscript$.\newline
We conclude that $\uscript _\iscript$ is a $\sob$ effect algebra.
\end{proof}

\begin{thm}    
\label{thm34}
If $\iscript _\alpha (\Delta )(\rho )=\iscript (\Delta )(\alpha )$ is a constant state instrument, then $\uscript _{\iscript _\alpha}$ is a $\sob$ effect algebra if and only if for every $a,b\in\escript (H)$ satisfying $\trace\sqbrac{\iscriptbar (\alpha )(a+b)}\le 1$ there exists a $c\in\escript (H)$ such that
\begin{equation*}
\trace\sqbrac{\iscript (\Delta )(\alpha )c}=\trace\sqbrac{\iscript (\Delta )(\alpha )(a+b)}
\end{equation*}
for all $\Delta\in\fscript _\iscript$.
\end{thm}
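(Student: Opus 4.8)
The plan is to verify the three axioms (S1)--(S3) for a $\sob$ effect algebra applied to $\uscript_{\iscript_\alpha}=\brac{(\iscript_\alpha^*)_a\colon a\in\escript(H)}$, using the explicit formula $(\iscript_\alpha^*)_a(\Delta)=\trace\sqbrac{\iscript(\Delta)(\alpha)a}I$ already derived for the constant-state instrument. I expect (S1) and (S2) to hold unconditionally, so that the biconditional in the theorem is really a restatement of (S3).

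First I would dispose of (S1): the candidate observable is $Z=(\iscript_\alpha^*)_I=\iscripthat_\alpha$, which lies in $\uscript_{\iscript_\alpha}$ by construction, and $Z(\Omega_\iscript)=\trace\sqbrac{\iscriptbar(\alpha)}I=I$ since $\iscriptbar$ is a channel, so $Z\in\ob(H)$. For (S2), given $(\iscript_\alpha^*)_a\in\uscript_{\iscript_\alpha}$, linearity of the trace gives $Z(\Delta)-(\iscript_\alpha^*)_a(\Delta)=\trace\sqbrac{\iscript(\Delta)(\alpha)a'}I=(\iscript_\alpha^*)_{a'}(\Delta)$, whence $Z-(\iscript_\alpha^*)_a=(\iscript_\alpha^*)_{a'}\in\uscript_{\iscript_\alpha}$; this holds for every $\alpha$ and every $\iscript$, with no extra hypothesis.

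The crux is (S3), and this is where the stated condition enters. Given $(\iscript_\alpha^*)_a,(\iscript_\alpha^*)_b\in\uscript_{\iscript_\alpha}$, their sum satisfies $\paren{(\iscript_\alpha^*)_a+(\iscript_\alpha^*)_b}(\Delta)=\trace\sqbrac{\iscript(\Delta)(\alpha)(a+b)}I$. I would first argue that this sum lies in $\sob(H)$ exactly when $\trace\sqbrac{\iscriptbar(\alpha)(a+b)}\le 1$. The key step is monotonicity: writing $\iscript(\Omega_\iscript)=\iscript(\Delta)+\iscript(\Omega_\iscript\setminus\Delta)$ and using positivity of $\iscript(\Omega_\iscript\setminus\Delta)$ gives $\iscript(\Delta)(\alpha)\le\iscriptbar(\alpha)$; since $a+b\ge 0$ and the trace of a product of positive operators is nonnegative, the scalar $\trace\sqbrac{\iscript(\Delta)(\alpha)(a+b)}$ attains its maximum over $\Delta$ at $\Delta=\Omega_\iscript$. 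Hence the effect-valued membership condition $\paren{(\iscript_\alpha^*)_a+(\iscript_\alpha^*)_b}(\Delta)\le I$ for all $\Delta$ collapses to the single scalar inequality at $\Omega_\iscript$. This reduction of sub-observable membership to one scalar trace inequality is the only genuine obstacle; everything else is bookkeeping against the earlier formulas.

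With this reduction in hand, (S3) reads: for every $a,b$ with $\trace\sqbrac{\iscriptbar(\alpha)(a+b)}\le 1$ there is some $c\in\escript(H)$ with $(\iscript_\alpha^*)_c=(\iscript_\alpha^*)_a+(\iscript_\alpha^*)_b$. Cancelling the common factor $I$, this equality of sub-observables is precisely $\trace\sqbrac{\iscript(\Delta)(\alpha)c}=\trace\sqbrac{\iscript(\Delta)(\alpha)(a+b)}$ for all $\Delta\in\fscript_\iscript$, which is the displayed condition. Thus (S3) is logically equivalent to the stated condition, and since (S1) and (S2) are automatic, $\uscript_{\iscript_\alpha}$ is a $\sob$ effect algebra if and only if that condition holds, giving both directions of the biconditional at once.
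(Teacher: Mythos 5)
Your proposal is correct and takes essentially the same route as the paper's proof: (S1) and (S2) are verified unconditionally from the formula $(\iscript _\alpha ^*)_a(\Delta )=\trace\sqbrac{\iscript (\Delta )(\alpha )a}I$, and the biconditional is obtained by showing (S3) is exactly equivalent to the displayed trace condition. Your explicit monotonicity step ($\iscript (\Delta )(\alpha )\le\iscriptbar (\alpha )$, which collapses membership of the sum in $\sob (H)$ to the single scalar inequality at $\Delta =\Omega _\iscript$) is used but left implicit in the paper's converse direction, so if anything your write-up is slightly more complete at that point.
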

\begin{proof}
We have that $(\iscript _\alpha ^*)_a(\Delta )=\trace\sqbrac{\iscript (\Delta )(\alpha )a}I$ for all $a\in\escript (H)$, $\Delta\in\fscript _\iscript$. The three conditions for
$\uscript _\iscript$ to be a $\sob$ effect algebra are the following:\newline
(1)\enspace $(\iscript _\alpha ^*)_I=\trace\sqbrac{\iscript (\Delta )(\alpha )}I\in\ob (H)$.\newline
(2)\enspace $(\iscript _\alpha ^*)_I-(\iscript _\alpha ^*)_a=(\iscript _\alpha ^*)_{a'}\in\uscript _{\iscript _\alpha}$.\newline
(3)\enspace Let $(\iscript _\alpha ^*)_a,(\iscript _\alpha ^*)_b\in\uscript _\alpha$ and suppose that $(\iscript _\alpha ^*)_a+(\iscript _\alpha ^*)_b\in\sob (H)$. We then have
\begin{align}                
\label{eq31}
\sqbrac{(\iscript _\alpha ^*)_a+(\iscript _\alpha ^*)_b}(\Delta )&=(\iscript _\alpha ^*)_a(\Delta )+(\iscript _\alpha ^*)_b(\Delta )
    =\iscript _\alpha ^*(\Delta )(a)+\iscript _\alpha ^*(\Delta )(b)\\
    &=\brac{\trace\sqbrac{\iscript (\Delta )(\alpha )a}+\trace\sqbrac{\iscript (\Delta )(\alpha )b}}=\trace\sqbrac{\iscript (\Delta )(\alpha )(a+b)}I\notag
\end{align}
and hence, $\trace\sqbrac{\iscript (\Delta )(\alpha )(a+b)}\le1$ for all $\Delta\in\fscript _\iscript$.\newline 
 It follows that $\sqbrac{\iscriptbar (\alpha )(a+b)}\le 1$. If the given condition holds, there exists a $c\in\escript (H)$ such that
\begin{equation*}
(\iscript _\alpha ^*)_c(\Delta )=\trace\sqbrac{\iscript (\Delta )(\alpha )c}I=\trace\sqbrac{\iscript (\Delta )(\alpha )}I
    =(\iscript _\alpha ^*)_a(\Delta )+(\iscript _\alpha ^*)_b(\Delta )
\end{equation*}
for all $\Delta\in\fscript _\iscript$ so that $(\iscript _\alpha ^*)_c=(\iscript _\alpha ^*)_a+(\iscript _\alpha ^*)_b$.

Hence, $\uscript _{\iscript _\alpha}$ is a $\sob$ effect algebra. Conversely, suppose $\uscript _{\iscript _\alpha}$ is a $\sob$ effect algebra and $a,b\in\escript (H)$ satisfy $\trace\sqbrac{\iscriptbar (\alpha )(a+b)}\le 1$. Then by \eqref{eq31}
\begin{equation*}
\sqbrac{(\iscript _\alpha ^*)_a+(\iscript _\alpha ^*)_b}(\Delta )\le I
\end{equation*}
for all $\Delta\in\fscript _\iscript$ so $(\iscript _\alpha ^*)_a+(\iscript _\alpha ^*)_b\in\sob (H)$. Therefore, there exists a $c\in\escript (H)$ such that
$(\iscript _\alpha ^*)_a+(\iscript _\alpha ^*)_b=(\iscript _\alpha ^*)_c$. Again by \eqref{eq31}
\begin{equation*}
\trace\sqbrac{\iscript (\Delta )(\alpha )c}=\trace\sqbrac{\iscript (\Delta )(\alpha )(a+b)}\qedhere
\end{equation*}
\end{proof}

\begin{exam}{3}  
If $\lscript _x(\rho )=a_x\circ\rho$ is an arbitrary L\"uders instrument, we do not know whether $\uscript _\lscript$ is a $\sob$ effect algebra. However, in the case where $a_x$ are projections (we then call $\lscript$ \textit{sharp}) we can show that it is. Indeed, if $\lscript _a^*+\lscript _b^*\in\sob (H)$ then
\begin{align*}
c=\sum_{x\in\Omega _\lscript}a_x(a+b)a_x&=\sum _{x\in\Omega _\lscript}\sqbrac{\lscript _x^*(a+b)}
    =\sum _{x\in\Omega _\lscript}\sqbrac{\lscript _x^*(a)+\lscript _x^*(b)}\\
    &=\sum _{x\in\Omega _\lscript}\sqbrac{(\lscript _a^*)_x+(\lscript _b^*)_x}\le I
\end{align*}
We conclude that $c\in\escript (H)$ and since $a_xa_y=\delta _{xy}a_x$ for every $x,y\in\Omega _\lscript$ we have
\begin{align*}
(\lscript _c^*)_y&=a_yca_y=a_y\sum _{x\in\Omega _\lscript}a_x(a+b)a_xa_y=a_y(a+b)a_y=a_yaa_y+a_yba_y\\
    &=(\lscript _a^*)_y+(\lscript _b^*)_y
\end{align*}
Hence, $\lscript _c^*=\lscript _a^*+\lscript _b^*$ so $\uscript _\lscript$ is a $\sob$ effect algebra.\hfill\qedsymbol
\end{exam}

A subset $\vscript\subseteq\sob (H)$ is \textit{convex} if $A_i\in\vscript$, $0\le\lambda _i\le 1$, $i=1,2,\ldots ,n$, $\sum _{i=1}^n\lambda _i=1$, implies
$\sum _{i=1}^n\lambda _iA_i\in\vscript$. A $\sob$ effect algebra need not be convex. For example $\vscript =\brac{\brac{0,0},\brac{0,I}}$ is a $\sob$ effect algebra that is not convex because
\begin{equation*}
\tfrac{1}{2}\brac{0,0}+\tfrac{1}{2}\brac{0,I}=\brac{0,\tfrac{1}{2}\,I}\notin\vscript
\end{equation*}

\begin{thm}    
\label{thm35}
A $\sob$ effect algebra $\uscript$ is convex if and only if $A\in\uscript$, $0\le\lambda\le 1$ imply $\lambda A\in\uscript$.
\end{thm}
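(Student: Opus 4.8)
The plan is to prove both implications directly, with the reverse implication being the substantive one. For the forward direction, assume $\uscript$ is convex. Recall that $0\in\uscript$, as shown just before Theorem~\ref{thm31}. Given $A\in\uscript$ and $0\le\lambda\le 1$, I would write $\lambda A=\lambda A+(1-\lambda)0$ as a convex combination of the two elements $A,0\in\uscript$ with coefficients $\lambda,1-\lambda\in[0,1]$ summing to $1$. Convexity then gives $\lambda A\in\uscript$ immediately.

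For the reverse direction, assume that $A\in\uscript$ and $0\le\lambda\le 1$ imply $\lambda A\in\uscript$, and let $A_1,\dots,A_n\in\uscript$, $0\le\lambda_i\le 1$ with $\sum_{i=1}^n\lambda_i=1$. By hypothesis each $\lambda_iA_i\in\uscript$, so I would assemble $\sum_{i=1}^n\lambda_iA_i$ one term at a time by induction on the partial sums $S_k=\sum_{i=1}^k\lambda_iA_i$, using (S3) at each stage.

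The key step, and the only real obstacle, is to verify that each partial sum $S_k$ is a sub-observable, so that (S3) is applicable. For this I would invoke the order relation on $\sob(H)$: since $A_i'=Z-A_i\in\uscript\subseteq\sob(H)$ by (S2), its values $(Z-A_i)(\Delta)\ge 0$ give $A_i(\Delta)\le Z(\Delta)$ for every $\Delta\in\fscript$. Combining this with $\sum_{i=1}^k\lambda_i\le\sum_{i=1}^n\lambda_i=1$ and $Z(\Delta)\le Z(\Omega)=I$ yields
\[
0\le S_k(\Delta)=\sum_{i=1}^k\lambda_iA_i(\Delta)\le\paren{\sum_{i=1}^k\lambda_i}Z(\Delta)\le Z(\Delta)\le I,
\]
and since $S_k$ is a finite linear combination of the countably additive maps $A_i$ it is itself countably additive, whence $S_k\in\sob(H)$.

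With this in hand the induction runs smoothly: $S_1=\lambda_1A_1\in\uscript$ by hypothesis, and if $S_k\in\uscript$ then, because $\lambda_{k+1}A_{k+1}\in\uscript$ and $S_{k+1}=S_k+\lambda_{k+1}A_{k+1}\in\sob(H)$, condition (S3) gives $S_{k+1}\in\uscript$. Taking $k=n$ shows $\sum_{i=1}^n\lambda_iA_i=S_n\in\uscript$, so $\uscript$ is convex. The point worth flagging is that closure under scalar multiplication is exactly what places the individual summands $\lambda_iA_i$ in $\uscript$, while the normalization $\sum_{i=1}^n\lambda_i=1$ is precisely what keeps every partial sum inside $\sob(H)$; without the latter the iterated use of (S3) could break down.
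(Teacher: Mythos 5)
Your proof is correct, and it differs from the paper's in the bookkeeping of the converse direction. The forward direction coincides with the paper's argument exactly: $\lambda A=\lambda A+(1-\lambda )0$ with $0\in\uscript$. For the converse, the paper also inducts, but on the number of terms of a \emph{normalized} convex combination: at stage $n+1$ it sets $\mu =\sum _{i=1}^n\lambda _i$, applies the inductive hypothesis to the rescaled coefficients $\lambda _i/\mu$ (which sum to $1$), and then uses scalar closure to multiply the resulting element of $\uscript$ back by $\mu$; this renormalization forces the side condition that one may assume $\mu\ne 0$. Your induction on the partial sums $S_k=\sum _{i=1}^k\lambda _iA_i$ instead scales each term once at the outset and never divides, so the degenerate case disappears; the price, which you correctly pay, is verifying $S_k\in\sob (H)$ at every stage rather than only when combining two elements. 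One simplification worth noting: that verification does not need (S2) or the observable $Z$ at all. Since each $A_i\in\sob (H)$ takes values in $\escript (H)$, you have $A_i(\Delta )\le I$ directly, whence $0\le S_k(\Delta )\le \paren{\sum _{i=1}^k\lambda _i}I\le I$ — this is exactly the bound the paper uses in its $n=2$ base case, $\lambda _1A_1+\lambda _2A_2\le\lambda _1I+\lambda _2I=I$. Your detour through $A_i(\Delta )\le Z(\Delta )\le Z(\Omega )=I$ via (S2) is valid (membership of $Z-A_i$ in $\sob (H)$ does give $A_i(\Delta )\le Z(\Delta )$) but unnecessary. In sum: same essential ingredients — scalar closure to place $\lambda _iA_i$ in $\uscript$, an $\le I$ bound to stay in $\sob (H)$, and repeated use of (S3) — with your version trading the paper's renormalization trick for a per-stage sub-observable check, yielding a slightly cleaner argument.
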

\begin{proof}
Suppose $\uscript$ is convex, $A\in\uscript$ and $0\le\lambda\le 1$. We then have
\begin{equation*}
\lambda A=\lambda A+(1-\lambda )0\in\uscript
\end{equation*}
Conversely, suppose $\lambda A\in\uscript$ whenever $A\in\uscript$ and $0\le\lambda\le 1$. We need to show that if $A_i\in\uscript$, $0\le\lambda _i\le 1$,
$\sum _{i=1}^n\lambda _i=1$, then $\sum _{i=1}^n\lambda _iA_i\in\uscript$. We employ induction on $n$. If $n=2$, suppose that $A_1,A_2\in\uscript$,
$0\le\lambda _1,\lambda _2\le 1$ and $\lambda _1+\lambda _2=1$. By assumption $\lambda _1A_1,\lambda _2A_2\in\uscript$ and since
\begin{equation*}
\lambda _1A_1+\lambda _2A_2\le\lambda _1I+\lambda _2I=(\lambda _1+\lambda _2)I=I
\end{equation*}
we have that $\lambda _aA_1+\lambda _2A_2\in\sob (H)$. Since $\uscript$ is a $\sob$ effect algebra
$\lambda _1A_1+\lambda _2A_2\in\uscript$ so the result holds for $n=2$. Proceeding by induction, suppose the result holds for $n\ge 2$, $A_i\in\uscript$,
$i=1,2,\ldots ,n+1$, $0\le\lambda _i\le 1$ and $\sum _{i=1}^{n+1}\lambda _i=1$. Letting $\mu=\sum _{i=1}^n\lambda _i$ we can assume that $\mu\ne 0$. We have that $\sum _{i=1}^n\lambda _i/\mu =1$ so by hypothesis $\tfrac{1}{\mu}\sum _{n=1}^n\lambda _iA_i\in\uscript$. Since $0\le\mu\le 1$ we obtain
$\sum _{i=1}^n\lambda _iA_i\in\uscript$. Since $A_{n+1}\in\uscript$ we have that $\lambda _{n+1}A_{n+1}\in\uscript$. Moreover,
$\sum _{i=1}^n\lambda _iA_i+\lambda _{n+1}A_{n+1}\in\sob (H)$ so $\sum _{i=1}^{n+1}\lambda _iA_i\in\uscript$.
\end{proof}

Notice that $\uscript _\iscript =\brac{\iscript _a^*\colon a\in\escript (H)}$ is convex because if $\iscript _{a_i}^*\in\uscript _\iscript$ and $0\le\lambda _i\le 1$ with
$\sum\lambda _i=1$, then $b=\sum\lambda _ia_i\in\escript (H)$. We conclude that
\begin{equation*}
\sum\lambda _i\iscript _{a_i}^*=\iscript _b^*\in\uscript
\end{equation*}

If $a,b\in\escript (H)$ we define the \textit{sequential product} of $\iscript _a^*$ and $\iscript _b^*$ to be the sub-observable
$\iscript _a^*\circ\iscript _b^*(\Delta )=\iscript _{a\circ b}^*(\Delta )$ for all $\Delta\in\fscript _\iscript$. If $ab=ba$, it follows that
$\iscript _a^*\circ\iscript _b^*=\iscript _b^*\circ\iscript _a^*$. The next theorem follows from Lemma~\ref{lem21}.

\begin{thm}    
\label{thm36}
The sequential product $\iscript _a^*\circ\iscript _b^*$ satisfies the following conditions:
{\rm{(1)}}\enspace If $b\perp c$, then $\iscript _a^*\circ (\iscript _b^*+\iscript _c^*)=\iscript _a^*\circ\iscript _b^*+\iscript _a^*\circ\iscript _c^*$.\newline
{\rm{(2)}}\enspace If $0\le\lambda _i\le 1$, $\sum _{i=1}^n\lambda _i=1$, then for any $b_1,b_2,\ldots ,b_n\in\in\escript (H)$ we have
\begin{equation*}
\iscript _a^*\circ \paren{\sum _{i=1}^n\lambda _i\iscript _{b_i}^*}=\sum _{i=1}^n\lambda _i\iscript _a^*\circ\iscript _{b_i}^*
\end{equation*}
{\rm{(3)}}\enspace $\iscript _I^*\circ\iscript _a^*=\iscript _a^*\circ\iscript _I^*=\iscript _a^*$ for all $a\in\escript (H)$.\newline
{\rm{(4)}}\enspace If $a\circ b=0$, then $\iscript _a^*\circ\iscript _b^*=\iscript _b^*\circ\iscript _a^*$.\newline
{\rm{(5)}}\enspace If $ab=ba$, then $\iscript _a^*\circ (\iscript _b^*\circ\iscript _c^*)=(\iscript _a^*\circ\iscript _b^*)\circ\iscript _c^*$.\newline
{\rm{(6)}}\enspace If $ac=ca$ and $bc=cb$ then $\iscript _c^*\circ (\iscript _a^*\circ\iscript _b^*)=(\iscript _a^*\circ\iscript _b^*)\circ\iscript _c^*$ and\newline
$\iscript _c^*\circ (\iscript _a^*+\iscript _b^*)=(\iscript _a^*+\iscript _b^*)\circ\iscript _c^*$ when $a\perp b$.\newline
{\rm{(7)}}\enspace $\iscript _a^*\circ\iscript _b^*\le\iscript _a^*$\newline
{\rm{(8)}}\enspace If $a\le b$, then $\iscript _c^*\circ\iscript _a^*\le\iscript _c^*\circ\iscript _b^*$ for every $c\in\escript (H)$.
\end{thm}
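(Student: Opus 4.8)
The plan is to exploit the fact, established in Theorem~\ref{thm32}, that $F(a)=\iscript _a^*$ is a morphism from $\escript (H)$ onto $\uscript _\iscript$, so that each assertion becomes the transfer of a corresponding property of the standard sequential product collected in Lemma~\ref{lem21}, carried across by the defining identity $\iscript _a^*\circ\iscript _b^*=\iscript _{a\circ b}^*$. First I would record the three structural features of $F$ used throughout: (i) additivity, $\iscript _{a+b}^*=\iscript _a^*+\iscript _b^*$ whenever $a\perp b$, which is the morphism property; (ii) preservation of convex combinations, $\iscript _{\sum\lambda _ia_i}^*=\sum\lambda _i\iscript _{a_i}^*$, noted just before this theorem; and (iii) order-preservation, since $a\le b$ gives $\iscript _b^*(\Delta )-\iscript _a^*(\Delta )=\iscript ^*(\Delta )(b-a)\ge 0$ for every $\Delta$ because each $\iscript ^*(\Delta )$ is positive, whence $\iscript _a^*\le\iscript _b^*$.

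With these in hand most parts are one-line translations. Part (3) applies $F$ to Lemma~\ref{lem21}(2). Part (4) uses Lemma~\ref{lem21}(3) ($a\circ b=0$ forces $ab=ba$) together with the criterion quoted in Section~2 that $ab=ba$ iff $a\circ b=b\circ a$, so $\iscript _{a\circ b}^*=\iscript _{b\circ a}^*$. Part (5) applies $F$ to the associativity-under-commutation in Lemma~\ref{lem21}(4). Part (6) uses Lemma~\ref{lem21}(5): the commutation hypotheses make $c$ commute with both $a\circ b$ and $a\oplus b$, hence $c\circ(a\circ b)=(a\circ b)\circ c$ and, via additivity (i), the analogous statement for $a+b$ when $a\perp b$. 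Parts (7) and (8) combine Lemma~\ref{lem21}(6),(7) with order-preservation (iii): $a\circ b\le a$ yields $\iscript _{a\circ b}^*\le\iscript _a^*$, and $a\le b$ yields $c\circ a\le c\circ b$, hence $\iscript _{c\circ a}^*\le\iscript _{c\circ b}^*$.

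For (1) I would start from Lemma~\ref{lem21}(1): when $b\perp c$, the identity $a\circ(b\oplus c)=a\circ b\oplus a\circ c$ shows in particular that $a\circ b\perp a\circ c$, so the right-hand sum is legitimately defined; then additivity (i) on both sides gives $\iscript _a^*\circ(\iscript _b^*+\iscript _c^*)=\iscript _{a\circ(b+c)}^*=\iscript _{a\circ b+a\circ c}^*=\iscript _a^*\circ\iscript _b^*+\iscript _a^*\circ\iscript _c^*$. Part (2) is the convex-combination analogue: pulling $\sum\lambda _i$ through $F$ by (ii) and using linearity of $\circ$ in its second argument, $\iscript _a^*\circ\paren{\sum\lambda _i\iscript _{b_i}^*}=\iscript _{a\circ(\sum\lambda _ib_i)}^*=\iscript _{\sum\lambda _i(a\circ b_i)}^*=\sum\lambda _i\iscript _a^*\circ\iscript _{b_i}^*$.

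There is no genuine obstacle here: the one point needing care beyond the morphism property of Theorem~\ref{thm32} is order-preservation (iii), which is the extra ingredient for (7) and (8) and rests on the positivity of each $\iscript ^*(\Delta )$ recorded in Section~2. I would also verify, in (1) and (6), that the relevant sums actually lie in $\sob (H)$ so that $\oplus$ (equivalently $+$) is defined before invoking additivity of $F$, since $F$ converts only genuine orthogonal sums.
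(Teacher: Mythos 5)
Your proposal is correct and matches the paper's intent exactly: the paper gives no written proof, remarking only that the theorem ``follows from Lemma~\ref{lem21},'' and your argument is precisely that transfer --- carrying each property of the standard sequential product across the map $a\mapsto\iscript_a^*$ via $\iscript_a^*\circ\iscript_b^*=\iscript_{a\circ b}^*$, with linearity and positivity of each $\iscript^*(\Delta)$ supplying the additivity, convexity, and order-preservation needed for parts (1), (2), (7), and (8). Your care in checking that the orthogonal sums in (1) and (6) are genuinely defined is a sound addition beyond what the paper records.
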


The distribution of $\iscript _a^*\circ\iscript _b^*$ in the state $\rho$ becomes
\begin{align}                
\label{eq32}
\trace\sqbrac{\rho\iscript _a^*\circ\iscript _b(\Delta )}&=\trace\sqbrac{\rho\iscript _{a\circ b}^*(\Delta )}=\trace\sqbrac{\rho\iscript ^*(\Delta )(a\circ b)}
   =\trace\sqbrac{\iscript (\Delta )(\rho )a\circ b}\notag\\
   &=\trace\sqbrac{\paren{a\circ\iscript (\Delta )(\rho )}b}
\end{align}

\begin{exam}{4}  
For the Holevo instrument $\hscript _{(\alpha ,A)}$ we have
\begin{align*}
(\hscript _{(\alpha ,A)}^*)_a\circ (\hscript _{(\alpha ,A)}^*)_b&(\Delta )=(\hscript _{(\alpha ,A)}^*)_{a\circ b}(\Delta )=\trace (\alpha a\circ b)A(\Delta )\\
   &=\trace (\alpha a)\trace\sqbrac{(a\circ b)^\sim b}A(\Delta )=\trace\sqbrac{(a\circ b)^\sim b}(\hscript _{(\alpha ,A)})_a(\Delta )
\end{align*}
Not only is $(\hscript _{(\alpha ,A)}^*)_a\circ (\hscript _{(\alpha ,A)}^*)_b\le (\hscript _{(\alpha ,A)}^*)_a$ as in Theorem~\ref{thm36}(7) but
\begin{equation*}
(\hscript _{(\alpha ,A)}^*)_a\circ (\hscript _{(\alpha ,A)}^*)_b=\lambda (\hscript _{(\alpha ,A)}^*)_a
\end{equation*}
for a constant $\lambda$. Writing $\iscript =\hscript _{(\alpha ,A)}$, since $\iscript (\Delta )(\rho )=\trace\sqbrac{\rho A(\Delta )}$ it follows from \eqref{eq32} that the distribution of $\iscript _a^*\circ\iscript _b^*$ is
\begin{equation*}
\trace\sqbrac{\rho\iscript _a^*\circ\iscript _b^*(\Delta )}=\trace\sqbrac{(a\circ\alpha )b}\trace\sqbrac{\rho A(\Delta )}
    =\trace\sqbrac{(a\circ\alpha )b}\trace\sqbrac{\iscript (\Delta )(\rho )}
\end{equation*}
which is a constant times the distribution of $\iscript$. If $\iscript _\alpha (\Delta )=\iscript (\Delta )\alpha$ is a constant state instrument, we obtain
\begin{equation*}
(\iscript _\alpha ^*)_a\circ (\iscript _\alpha ^*)_b(\Delta )=(\iscript _\alpha ^*)_{a\circ b}(\Delta )=\trace\sqbrac{\iscript (\Delta )(\alpha )a\circ b}I
   =\trace\sqbrac{\paren{a\circ\iscript (\Delta )\alpha}b}I
\end{equation*}
In particular, if $\iscript =\hscript _{(\beta ,A)}$ is a Holevo instrument, then 
\begin{align*}
(\iscript )_\alpha ^*)_a(\Delta )&=\trace\sqbrac{\hscript _{(\beta ,A)}(\Delta )(\alpha )a}=\trace\sqbrac{\trace\paren{\alpha A(\Delta )}\beta a}I
   =\trace\sqbrac{\alpha A(\Delta )}\trace (\beta a)I
\intertext{and}
(\iscript _\alpha ^*)_a&\circ (\iscript _\alpha ^*)_b(\Delta )=\trace\sqbrac{\alpha A(\Delta )}\trace\paren{a\circ\beta b}I\hskip 12pc\square
\end{align*}
\end{exam}

\section{Sequential Products of Instruments}  
Let $\iscript,\jscript\in\instr (H)$ with outcome spaces $(\Omega _\iscript ,\fscript _\iscript )$, $(\Omega _\jscript ,\fscript _\jscript )$, respectively. We define the
\textit{sequential product of} $\iscript$ \textit{then} $\jscript$ to be the instrument \cite{gud21} with outcome space
$(\Omega _\iscript\times\Omega _\jscript ,\fscript _\iscript\times\fscript _\jscript )$ that satisfies
\begin{equation*}
(\iscript\circ\jscript )(\Delta\times\Gamma )(\rho )=\jscript (\Gamma )\sqbrac{\iscript (\Delta )(\rho )}
\end{equation*}
for all $\Delta\in\fscript _\iscript$, $\Gamma\in\fscript _\jscript$, $\rho\in\sscript (H)$. We also define $\jscript$ \textit{conditioned by} $\iscript$ to be the instrument give by \cite{gud21}.
\begin{equation*}
\paren{\jscript (\iscript )(\Gamma )}(\rho )=(\iscript\circ\jscript )(\Omega _\iscript\times\Gamma )(\rho )=\jscript (\Gamma )\sqbrac{\,\iscriptbar (\rho )}
\end{equation*}

\begin{thm}    
\label{thm41}
{\rm{(1)}}\enspace For all $a\in\escript (H)$ we have
\begin{equation*}
(\iscript\circ\jscript )_a^*(\Delta\times\Gamma )=\iscript ^*(\Delta )\sqbrac{\jscript _a^*(\Gamma )}=\iscript _{\jscript _a^*(\Gamma )}^*(\Delta )
\end{equation*}
{\rm{(2)}}\enspace The observable measured by $\iscript\circ\jscript$ satisfies
\begin{equation*}
(\iscript\circ\jscript )_I^*(\Delta\times\Gamma )=\iscript _{\jscripthat (\Gamma )}^*(\Delta )
\end{equation*}
{\rm{(3)}}\enspace For all $a\in\escript (H)$ we have
\begin{equation*}
(\jscript\mid\iscript )_a^*(\Gamma )=\iscript _{\jscript _a^*(\Gamma )}^*(\Omega _\iscript )
\end{equation*}
{\rm{(4)}}\enspace The observable measured by $(\jscript\mid\iscript )$ satisfies
\begin{equation*}
(\jscript\mid\iscript )_I^*(\Gamma )=\iscript _{\jscripthat (\Gamma )}^*(\Omega _\iscript )
\end{equation*}
\end{thm}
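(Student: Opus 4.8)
The plan is to prove part (1) directly from the defining duality relation \eqref{eq21}, after which parts (2)--(4) fall out as special cases. Setting $a=I$ in (1) and recalling that $\jscript_I^*=\jscripthat$ yields (2) immediately. For (3), note that the conditioning is defined so that $(\jscript\mid\iscript)(\Gamma)=(\iscript\circ\jscript)(\Omega_\iscript\times\Gamma)$, hence $(\jscript\mid\iscript)^*(\Gamma)=(\iscript\circ\jscript)^*(\Omega_\iscript\times\Gamma)$ and $(\jscript\mid\iscript)_a^*(\Gamma)=(\iscript\circ\jscript)_a^*(\Omega_\iscript\times\Gamma)$; specializing (1) to $\Delta=\Omega_\iscript$ then gives (3), and (4) is in turn (3) with $a=I$.

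To establish (1), I would fix $a\in\escript(H)$, $\Delta\in\fscript_\iscript$, $\Gamma\in\fscript_\jscript$ and pair the effect $(\iscript\circ\jscript)_a^*(\Delta\times\Gamma)$ against an arbitrary state $\rho\in\sscript(H)$ under the trace. Unfolding the definitions of the dual and of the sequential product,
\begin{align*}
\trace\sqbrac{\rho(\iscript\circ\jscript)_a^*(\Delta\times\Gamma)}
   &=\trace\sqbrac{(\iscript\circ\jscript)(\Delta\times\Gamma)(\rho)\,a}
    =\trace\sqbrac{\jscript(\Gamma)\paren{\iscript(\Delta)(\rho)}a}.
\end{align*}
I would then apply the $\jscript$-duality to the operator $\iscript(\Delta)(\rho)$ to rewrite this as $\trace\sqbrac{\iscript(\Delta)(\rho)\,\jscript_a^*(\Gamma)}$, and apply the $\iscript$-duality with the effect $\jscript_a^*(\Gamma)$ to obtain $\trace\sqbrac{\rho\,\iscript^*(\Delta)\paren{\jscript_a^*(\Gamma)}}$. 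Since this holds for every state $\rho$ and states separate effects, the two effects coincide, giving $(\iscript\circ\jscript)_a^*(\Delta\times\Gamma)=\iscript^*(\Delta)\paren{\jscript_a^*(\Gamma)}=\iscript_{\jscript_a^*(\Gamma)}^*(\Delta)$, which is exactly (1).

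The one point requiring care, and the main (if mild) obstacle, is the middle step: relation \eqref{eq21} is written for states $\rho$, whereas I apply the $\jscript$-duality to $\iscript(\Delta)(\rho)$, which is only a partial state. This is justified because \eqref{eq21} exhibits $\jscript^*(\Gamma)$ as the adjoint of $\jscript(\Gamma)$ under the trace pairing $\elbows{T,A}=\trace\sqbrac{TA}$, so it extends by linearity from states to all trace-class operators, and $\iscript(\Delta)(\rho)$ is such an operator. Granting this, the three-fold chaining of the duality is purely formal, and the remaining deductions of (2)--(4) need nothing beyond the substitutions indicated above.
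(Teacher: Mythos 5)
Your proposal is correct and takes essentially the same route as the paper: the paper proves (1) by exactly the same three-step trace-duality chain (pairing against an arbitrary $\rho\in\sscript (H)$, moving $\jscript (\Gamma )$ and then $\iscript (\Delta )$ across the trace) and obtains (2) and (4) by setting $a=I$. The only cosmetic differences are that the paper re-runs the trace computation directly for (3) using $(\jscript\mid\iscript )(\Gamma )(\rho )=\jscript (\Gamma )\sqbrac{\,\iscriptbar (\rho )}$ instead of specializing (1) at $\Delta =\Omega _\iscript$ as you do, and that the paper silently applies \eqref{eq21} to the partial state $\iscript (\Delta )(\rho )$, the extension-by-linearity point you rightly flag explicitly.
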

\begin{proof}
(1)\enspace For all $\rho\in\sscript (H)$, $a\in\escript (H)$ we have
\begin{align*}
\trace\sqbrac{\rho (\iscript\circ\jscript )^*(\Delta\times\Gamma )(a)}&=\trace\sqbrac{(\iscript\circ\jscript )(\Delta\times\Gamma )(\rho )a}
    =\trace\brac{\jscript (\Gamma )\sqbrac{\iscript (\Delta )(\rho )}a}\\
    &=\trace\sqbrac{\iscript (\Delta )(\rho )\jscript ^*(\Gamma )(a)}=\trace\brac{\rho\iscript ^*(\Delta )\sqbrac{\jscript ^*(\Gamma )(a)}}
\end{align*}
Hence,
\begin{equation*}
(\iscript\circ\jscript )^*(\Delta\times\Gamma )(a)=\iscript ^*(\Delta )\sqbrac{\jscript ^*(\Gamma )(a)}=\jscript ^*\circ\iscript ^*(\Gamma\times\Delta )(a)
\end{equation*}
It follows that
\begin{equation*}
(\iscript\circ\jscript )_a^*(\Delta\times\Gamma )=\iscript ^*(\Delta )\sqbrac{\jscript _a(\Gamma )}=\iscript _{\jscript _a^*(\Gamma )}^*(\Delta )
\end{equation*}
(2)\enspace It follows from (1) that
\begin{equation*}
(\iscript\circ\jscript )_I^*(\Delta\times\Gamma )=\iscript _{\jscript _I^*(\Gamma )}^*(\Delta )=\iscript _{\jscripthat (\Gamma )}^*(\Delta )
\end{equation*}
(3)\enspace Since
\begin{align*}
\trace\sqbrac{\rho (\jscript\mid\iscript )_a^*(\Gamma )}&=\trace\sqbrac{\rho (\jscript\mid\iscript )^*(\Gamma )(a)}=\trace\sqbrac{(\jscript\mid\iscript )(\Gamma )(\rho )a}\\
   &=\trace\sqbrac{\jscript (\Gamma )\sqbrac{\,\iscriptbar (\rho )}a}
  =\trace\sqbrac{\,\iscriptbar (\rho )\jscript ^*(\Gamma )(a)}\\
  &=\trace\sqbrac{\,\iscriptbar (\rho )\jscript _a^*(\Gamma )}
   =\trace\sqbrac{\iscript (\Omega _\iscript )(\rho )\jscript _a^*(\Gamma )}\\
  &=\trace\sqbrac{\rho\iscript ^*(\Omega _\iscript )\paren{\jscript _a^*(\Gamma )}}
  =\trace\sqbrac{\rho\iscript _{\jscript _a^*(\Gamma )}^*(\Omega _\iscript )}
\end{align*}
We conclude that
\begin{equation*}
(\jscript\mid\iscript )_a^*(\Gamma )=\iscript _{\jscript _a^*(\Gamma )}^*(\Omega _\iscript ))
\end{equation*}
(4)\enspace It follows from (3) that
\begin{equation*}
(\jscript\mid\iscript )_I^*(\Gamma )=\iscript _{\jscript _I^*(\Gamma )}^*(\Omega _\iscript )=\iscript _{\jscripthat (\Gamma )}^*(\Omega _\iscript )\qedhere
\end{equation*}
\end{proof}

\begin{exam}{5}  
Let $\iscript =\hscript _{(\alpha ,A)}$, $\jscript =\hscript _{(\beta ,B)}$ be Holevo instruments so that $\iscript _a^*(\Delta )=\trace (\alpha a)A(\Delta )$ and
$\jscript _a^*(\Gamma )=\trace (\beta a)B(\Gamma )$. We then obtain
\begin{align*}
(\iscript\circ\jscript )(\Delta\times\Gamma )(\rho )&=\jscript (\Gamma )\sqbrac{\iscript (\Delta )(\rho )}=\jscript (\Gamma )\sqbrac{\trace\paren{\rho A(\Delta )}\alpha}\\
   &=\trace\sqbrac{\rho A(\Delta )}\jscript (\Gamma )(\alpha )=\trace\sqbrac{\rho A(\Delta )}\trace\sqbrac{\alpha B(\Gamma )}\beta
\end{align*}
For all $a\in\escript (H)$ we have
\begin{align*}
(\iscript\circ\jscript )_a^*(\Delta\times\Gamma )&=(\iscript\circ\jscript )^*(\Delta\times\Gamma )(a)=\iscript ^*(\Delta )\sqbrac{\jscript ^*(\Gamma )(a)}\\
    &=\iscript ^*(\Delta )\sqbrac{\trace (\beta a)B(\Gamma )}=\trace (\beta a)\trace\sqbrac{\alpha B(\Gamma )}A(\Delta )
\end{align*}
The observable measured by $\iscript\circ\jscript$ satisfies
\begin{equation*}
(\iscript\circ\jscript )_I^*(\Delta\times\Gamma )=\trace\sqbrac{\alpha B(\Gamma )}A(\Delta )
\end{equation*}
The instrument $\jscript$ conditioned by $\iscript$ becomes
\begin{equation*}
(\jscript\mid\iscript )(\Gamma )(\rho )=\jscript (\Gamma )\sqbrac{\,\iscripthat (\rho )}=\jscript (\Gamma )(\alpha )=\trace\sqbrac{\alpha B(\Gamma )}\beta
\end{equation*}
We then obtain
\begin{align*}(\jscript\mid\iscript )_a^*(\Gamma )&=\iscript ^*(\Omega _\iscript )\sqbrac{\jscript _a^*(\Gamma )}
   =\iscript ^*(\Omega _\iscript )\sqbrac{\trace (\beta a)B(\Gamma )}\\
   &=\trace (\beta a)\iscript ^*(\Omega _\iscript )\sqbrac{B(\Gamma )}=\trace (\beta a)\trace\sqbrac{\alpha B(\Gamma )}I
\end{align*}
The observable measured by $(\jscript\mid\iscript )$ is $(\jscript\mid\iscript )_I^*(\Gamma )=\trace\sqbrac{\alpha B(\Gamma )}I\hfill\square$
\end{exam}

\begin{exam}{6}  
Let $\iscript _x(\rho )=a_x\circ\rho$, $\jscript _y(\rho )=b_y\circ \rho$ be L\"uders instruments. We then have
\begin{equation*}
(\iscript\circ\jscript )(x,y)(\rho )=\jscript _y\sqbrac{\iscript _x(\rho )}=b_y\circ (a_x\circ\rho )
\end{equation*}
The dual instruments satisfying $\iscript _x^*(a)=a_x\circ a$, $\jscript _y^*(a)=b_y\circ a$ and we obtain
\begin{equation*}
(\iscript\circ\jscript )_a^*(x,y)=(\iscript\circ\jscript )^*(x,y)(a)=\jscript ^*\circ\iscript ^*(y,x)(a)=\iscript _x^*\sqbrac{\jscript _y^*(a)}=a_x\circ (b_y\circ a)
\end{equation*}
The observable measured by $\iscript\circ\jscript$ becomes
\begin{equation*}
(\iscript\circ\jscript )_I^*(x,y)=a_x\circ b_y
\end{equation*}
which is the standard sequential product of the observable $A=\brac{a_x\colon x\in\Omega _A}$ and $B=\brac{b_y\colon y\in\Omega _B}$ \cite{gud220,gud21}.
The instrument $\jscript$ conditioned by $\iscript$ becomes
\begin{equation*}
(\jscript\mid\iscript )_y(\rho )=\jscript _y\sqbrac{\,\iscriptbar (\rho )}=b_y\circ\sum _{x\in\Omega _\iscript}a_x\circ\rho =\sum _{x\in\Omega _\iscript}b_y\circ (a_x\circ\rho )
\end{equation*}
We then obtain
\begin{equation*}
(\jscript\mid\iscript )_a^*(y)=\iscript ^*(\Omega _\iscript )\sqbrac{\jscript _a^*(y)}=\sum _{x\in\Omega _\iscript}a_x\circ (b_y\circ a)
\end{equation*}
The observable measured by $(\jscript\mid\iscript$ is
\begin{equation*}
(\jscript\mid\iscript )_I^*=\sum _{x\in\Omega _\iscript}a_x\circ b_y
\end{equation*}
which is again of the standard form \cite{gud220,gud21}.\hskip 12pc$\square$
\end{exam}

\begin{exam}{7}  
Let $\iscript _\alpha$, $\jscript _\beta$ be constant-state instruments so that 
\begin{align*}
(\iscript _\alpha ^*)(\Delta )&=\iscript _\alpha ^*(\Delta )(a)=\trace\sqbrac{\iscript (\Delta )(\alpha )a}I\\
(\jscript _\beta ^*(\Gamma )&=\jscript _\beta ^*(\Gamma )(a)=\trace\sqbrac{\jscript (\Gamma )(\beta )a}I
\end{align*}
The sequential product becomes
\begin{align*}
(\iscript _\alpha\circ\jscript _\beta )(\Delta\times\Gamma )(\rho )&=\jscript _\beta (\Gamma )\sqbrac{\iscript _\alpha (\Delta )(\rho )}
   =\jscript _\beta (\Gamma )\sqbrac{\iscript (\Delta )(\alpha )}\\
   &=\jscript _\beta (\Gamma )\trace\sqbrac{\iscript (\Delta )(\alpha )}\sqbrac{\iscript (\Delta )(\alpha )^\sim}
   =\trace\sqbrac{\iscript (\Delta )(\alpha )}\jscript (\Gamma )(\beta )
\end{align*}
The sub-observables determined by $\iscript _\alpha\circ\jscript _\beta$ are given by
\begin{align*}
(\iscript _\alpha\circ\jscript _\beta )_a^*(\Delta\times\Gamma )&=\iscript _\alpha ^*(\Delta )\sqbrac{\jscript _\beta ^*(\Gamma )(a)}
   =\iscript _\alpha ^*(\Delta )\brac{\trace\sqbrac{\jscript (\Gamma )(\beta )a}I}\\
   &=\trace\sqbrac{\jscript (\Gamma )(\beta )a}\iscript _\alpha ^*(\Delta )(I)=\trace\sqbrac{\jscript (\Gamma )(\beta )a}\trace\sqbrac{\iscript (\Delta )\alpha }I
\end{align*}
The observable measured by $\iscript _\alpha\circ\jscript _\beta$ is
\begin{equation*}
(\iscript _\alpha\circ\jscript _\beta )_I^*(\Delta\times\Gamma )=\trace\sqbrac{\jscript (\Gamma )(\beta )}\trace\sqbrac{\iscript (\Delta )(\alpha )}I
\end{equation*}
The instrument $\jscript _\beta$ conditioned by $\iscript _\alpha$ satisfies
\begin{equation*}
(\jscript _\beta\mid\iscript _\alpha )(\Gamma )(\rho )=\jscript _\beta (\Gamma )\sqbrac{\,\iscriptbar _\alpha (\rho )}
   =\jscript _\beta (\Gamma )\sqbrac{\,\iscriptbar (\alpha )}=\jscript (\Gamma )(\beta )
\end{equation*}
so we conclude that $(\jscript _\beta\mid\iscript _\alpha )=\jscript _\beta$. We have that
\begin{equation*}
(\jscript _\beta\mid\iscript _\alpha )_a^*(\Gamma )=(\jscript _\beta ^*)_a(\Gamma )=\trace\sqbrac{\jscript (\Gamma )(\beta )a}I
\end{equation*}
and the observable measured by $(\jscript _\beta\mid\iscript _\alpha )$ becomes
\begin{equation*}
(\jscript _\beta\mid\iscript _\alpha )_I^*(\Gamma )=\trace\sqbrac{\jscript (\Gamma )(\beta }I\hskip 15pc\square
\end{equation*}
\end{exam}

So far we have considered examples of sequential products for two instruments of the same types. We now discuss sequential products of instruments of different types.

\begin{exam}{8}  
Let $\iscript$ be the L\"uders instrument $\iscript _x(\rho )=a_x\circ\rho$ and $\jscript$ the finite Holevo instrument $\jscript _y(\rho )=\trace\sqbrac{\rho B_y}\beta _y$. The sequential product becomes
\begin{equation*}
(\iscript\circ\jscript )(x,y)(\rho )=\jscript _y\sqbrac{\iscript _x(\rho )}=\jscript _y(a_x\circ\rho )=\trace\sqbrac{(a_x\circ\rho )B_y}\beta _y
\end{equation*}
The sub-observables determined by $\iscript\circ\jscript$ are given by
\begin{align*}
(\iscript\circ\jscript )_a^*(x,y)&=\iscript _x^*\sqbrac{\jscript _y^*(a)}=\iscript _x^*\sqbrac{\trace (\beta _ya)B_y}=\trace (\beta _ya)\iscript _x^*(B_y)\\
   &=\trace (\beta _ya)a_x\circ B_y
\end{align*}
As in Example~6, we obtain
\begin{equation*}
(\iscript\circ\jscript )_I^*(x,y)=a_x\circ B_y
\end{equation*}
which is the standard sequential product of the observables $A=\brac{a_x\colon x\in\Omega _A}$ and $B$. Letting
\begin{equation*}
C_y=(B\mid A)_y=\sum _x(a_x\circ B_y)
\end{equation*}
we obtain
\begin{align*}
(\jscript\mid\iscript )_y(\rho )&=\jscript _y\sqbrac{\,\iscriptbar (\rho )}=\jscript _y\sqbrac{\sum _{x\in\Omega _A}a_x\circ\rho}
   =\trace\sqbrac{\paren{\sum _{x\in\Omega _A}a_x\circ\rho}B_y}\beta _y\\
   &=\trace\sqbrac{\rho\sum _{x\in\Omega _A}(a_x\circ B_y)}\beta _y=\hscript _{(\beta ,C)}(y)(\rho )
\end{align*}
We have that
\begin{align*}
(\jscript\mid\iscript )_a^*(y)&=\iscript ^*(\Omega _\jscript )\sqbrac{\jscript _a^*(y)}=\sum _{x\in\Omega _\iscript}a_x\circ\sqbrac{\jscript _a^*(y)}
    =\sum _{x\in\Omega _\iscript}a_x\circ\sqbrac{\trace (\beta _ya)B_y}\\
    &=\trace (\beta _ya)\sum _{x\in\Omega _\iscript}(a_x\circ B_y)=\trace (\beta _ya)C_y
\end{align*}
The observable measured by $(\jscript\mid\iscript )$ becomes
\begin{equation*}
(\jscript\mid\iscript )_I^*(y)=\sum _{x\in\Omega _\iscript}(a_x\circ B_y)=C_y=(B\mid A)_y
\end{equation*}
We now consider the other order. These are given by the following equations.
\begin{align*}
(\jscript\circ\iscript )(y,x)(\rho )&=\iscript _x\sqbrac{\jscript _y (\rho )}=\iscript _x\sqbrac{\trace (\rho B_y)\beta _y}=\trace (\rho B_y)\iscript _x(\beta _y)\\
   &=\trace (\rho B_y)a_x\circ B_y\\
   (\jscript\circ\iscript )_a^*(x,y)&=\jscript _y^*\sqbrac{\iscript _x^*(a)}=\jscript _y^*(a_x\circ a)=\trace (\beta _ya_x\circ a)B_y\\
   (\jscript\circ\iscript )_I^*(x,y)&=\trace (\beta _ya_x)B_y\\
   (\iscript\mid\jscript )_x(\rho )&=\iscript _x\sqbrac{\,\jscriptbar (\rho )}=a_x\circ\sum _{y\in\Omega _\jscript}\jscript _y(\rho )
   =a_x\circ\sum _{y\in\Omega _\jscript}\trace (\rho B_y)\beta _y\\
   &=\sum _{y\in\Omega _\jscript}\trace (\rho B_y)a_x\circ\beta _y\\
   (\iscript\mid\jscript )_a^*(x)&=\jscript ^*(\Omega _\jscript )\sqbrac{\iscript _a^*(x)}=\sum _{y\in\Omega _\jscript}\jscript _y^*\sqbrac{\iscript _a^*(y)}
   =\sum _{y\in\Omega _\jscript}\trace\sqbrac{\beta _y\iscript _a^*(x)}B_y\\
   (\iscript\mid\jscript )_I^*(x)&=\sum _{y\in\Omega _\jscript}\trace (\beta _ya_x)B_y\hskip 16pc\square
\end{align*}
\end{exam}

\end{document}